\documentclass[11pt]{article}
\usepackage[utf8]{inputenc}
\usepackage[margin=1in]{geometry}
\usepackage{mathtools, amsmath, amsthm, graphicx, enumitem, amsfonts, amssymb, verbatim, bbm, braket, xcolor, float, booktabs, multirow, colortbl, complexity, comment, tikz}
\usepackage[linktocpage=true]{hyperref}
\usepackage[backend=biber, style=alphabetic, maxbibnames=99]{biblatex}
\usepackage[capitalise]{cleveref}
\addbibresource{refs.bib}

\allowdisplaybreaks

\newcommand{\defeq}{\stackrel{\mathrm{\scriptscriptstyle def}}{=}}
\newcommand{\of}[1]{\left( #1 \right)}
\newcommand{\ofc}[1]{\left\{ #1 \right\}}
\newcommand{\ofb}[1]{{\left[#1\right]}}

\newcommand{\pmatch}{\mathrm{PMATCH}}
\newcommand{\match}{\mathrm{MATCH}}
\newcommand{\prd}{\mathrm{PROD}}
\newcommand{\tr}{\mathrm{Tr}}
\newcommand{\bloch}[1]{\vec{B}(#1)}
\newcommand{\maxFM}{\textnormal{FM}}
\newcommand{\maxM}{\textnormal{M}}

\newtheorem{lemma}{Lemma}
\newtheorem{theorem}{Theorem}

\newtheorem{algorithm}{Algorithm}

\theoremstyle{definition}
\newtheorem{remark}{Remark}

\hypersetup{
    colorlinks=true,
    linkcolor=blue,
    filecolor=magenta,      
    urlcolor=cyan,
    citecolor=violet,
}

\synctex=1
\setcounter{tocdepth}{2}

\begin{document}
\title{Improved Algorithms for Quantum MaxCut via Partially Entangled Matchings}
\author{
  Anuj Apte\thanks{University of Chicago}
  \and
  Eunou Lee\thanks{Korea Institute for Advanced Study}
  \and
  Kunal Marwaha\footnotemark[1]
  \and
  Ojas Parekh\thanks{Sandia National Laboratories}
  \and
  James Sud\footnotemark[1]\;\,\thanks{jsud@uchicago.edu}
}

\date{}
\maketitle

\begin{abstract}
    We introduce a $0.611$-approximation algorithm for Quantum MaxCut and a $\frac{1+\sqrt{5}}{4} \approx 0.809$-approximation algorithm for the EPR Hamiltonian of \cite{king2023}. A novel ingredient in both of these algorithms is to \emph{partially} entangle pairs of qubits associated to edges in a matching, while preserving the direction of their single-qubit Bloch vectors. This allows us to interpolate between product states and matching-based states with a tunable parameter.
\end{abstract}

\section{Introduction}\label{sec:introduction}
Given a graph $G(V, E, w)$ with positive edge weights $w: E \rightarrow \mathbb{R}_+$ and a $2$-local Hamiltonian term $h$, define the $n$-qubit Hamiltonian 
\begin{align*}
    H_G \defeq \sum_{(i,j) \in E(G)} w_{ij} \cdot h_{ij}\,,
\end{align*}
where $h_{ij}$ is the local term $h$ applied on qubits $(i,j)$. For two particular local terms $h$, we are interested in the problem of computing the the maximum energy of $H_G$, which we denote as $\lambda_{max}(H_G)$, for any $G$. This is not an easy task in general: deciding if $\lambda_{max}(H_G)$ is above some threshold with inverse polynomial accuracy is known to be $\QMA$-hard~\cite{piddock2015}. 

In the first problem, we choose the local term
\begin{align*}
    h^{QMC}_{ij} \defeq \frac{1}{2} \left( I_i I_j - X_iX_j - Y_iY_j - Z_iZ_j \right) = 2 \ket{\psi_-}_{ij}\bra{\psi_-}_{ij}\,,
\end{align*}
where $\ket{\psi_-} = \frac{1}{\sqrt{2}} \left(\ket{01} - \ket{10}\right)$ is the singlet state. This problem has recently been studied under the name \emph{Quantum MaxCut} (QMC) \cite{gharibian2019}. In the statistical mechanics literature, Hamiltonians $H_G$ defined by $h^{QMC}$ are instances of the zero-field quantum Heisenberg XXX\textsubscript{1/2} model. The decision version of QMC is $\QMA$-hard~\cite{piddock2015}.

In the second problem,  we choose the local term
\begin{align*}
        h^{EPR}_{ij} \defeq \frac{1}{2} \left( I_i I_j + X_iX_j - Y_iY_j + Z_iZ_j\right)= 2 \ket{\phi_+}_{ij}\bra{\phi_+}_{ij}\,,
\end{align*}
where $\ket{\phi_+} = \frac{1}{\sqrt{2}} \left(\ket{00} + \ket{11}\right)$. This problem, named \emph{EPR} by \cite{king2023}, is thought to be easier than QMC, since Hamiltonians $H_G$ defined by $h^{EPR}$ are \emph{stoquastic} (sign-problem free) \cite{piddock2015, cubitt2016}. In fact, it is not yet clear if EPR can be solved in polynomial time. 

In lieu of exactly computing $\lambda_{max}(H_G)$, we may try to \emph{approximate} this value. In both problems, the local term $h$ is positive semidefinite, so $\lambda_{max}(H_G) \ge 0$. We judge an approximation by its \emph{approximation ratio}. Suppose we can find efficiently computable functions $\ell, u$ such that for all graphs $G$,
\begin{align*}
    0 \le \ell(G) \le \lambda_{max}(H_G) \le  u(G)\,.
\end{align*}
Then, the approximation ratio $\alpha$ is at least
\begin{align*}
    \alpha \ge \min_G \frac{\ell(G)}{\lambda_{max}(H_G)} \ge  \min_G \frac{\ell(G)}{u(G)}\,.
\end{align*}
In most works, the upper bound $u$ is determined by solving a semidefinite programming (SDP) relaxation of the maximization problem \cite{brandao2014, gharibian2019, parekh2021, takahashi2023, watts2024,huber2024}.
The source of such SDP relaxations are generally hierarchies of SDPs that provide increasingly better upper bounds $u$ at the cost of solving larger-sized SDPs. The quantum moment-SOS hierarchy, typically based on Pauli operators, is widely employed for quantum local Hamiltonian problems. This hierarchy is an instance of the NPA hierarchy~\cite{navascues2008} and is also known as the quantum Lasserre hierarchy~\cite{parekh2021}. The second level of the quantum moment-SOS hierarchy is necessary for good approximations, since critical monogamy-of-entanglement properties begin to emerge at that level~\cite{parekh2021, parekh2022}.

The lower bound $\ell$ is usually determined by an algorithm that prepares a state with this energy. Several approximation algorithms have been proposed for QMC and EPR, using product states~\cite{gharibian2019, parekh2021, parekh2022, king2023}, matchings~\cite{anshu2020, parekh2021, parekh2022, lee2024, jorquera2024, gribling2025}, and short variational circuits~\cite{anshu2020,lee2022,king2023,ju2025,gribling2025}.
During the preparation of this manuscript, the best known approximation ratios for these problems were improved to $\alpha \ge 0.603$ for QMC~\cite{gribling2025} and $\alpha \ge \frac{1+\sqrt{5}}{4} \approx 0.809$ for EPR~\cite{ju2025}. 

We provide a new algorithm for both problems. For each algorithm, we start from a good product state. We then choose a matching in $G$ and \emph{partially} rotate matched qubits towards an entangled state. Our novel contribution is to perform this rotation such that individual single-qubit Bloch vectors are preserved, up to a rescaling of magnitude. This allows us to evaluate the energy of our entangled state in terms of the energy of the product state. For EPR, we use a \emph{fractional} matching and provide a circuit-based algorithm. For QMC, we explicitly describe a tensor-product of single and two-qubit states, which interpolates between the product state and an \emph{integer} matching-based state with a single tunable parameter. For QMC, we additionally introduce the technique of choosing a maximum-weight matching with respect to \emph{rescaled} edge weights. One interpretation of this approach is that we widen the search space of our algorithm to include matchings other than the maximum-weight matching of $G$. Combining these techniques allows us to achieve state-of-the-art approximation ratios on each problem:

\begin{theorem}
\label{thm:epr_improved}
    For EPR, $\alpha \ge \frac{\varphi}{2} \approx 0.809$,
    where $\varphi \defeq \frac{1 + \sqrt{5}}{2} \approx 1.618$ is the golden ratio.
\end{theorem}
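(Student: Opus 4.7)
The plan is to combine a product-state rounding with the partial-entanglement primitive on a fractional matching. First, I would invoke a standard SDP relaxation (for instance level two of the quantum Lasserre hierarchy on Pauli operators) that produces both an upper bound $u(G) \ge \lambda_{max}(H_G)$ and a collection of single-qubit Bloch vectors $\vec{b}_i$ realizing a product state $\ket{\psi_{\mathrm{prod}}} = \bigotimes_i \ket{\psi_i}$ whose EPR energy is a known fraction of $u(G)$. The Bloch vectors $\vec{b}_i$ will serve as the ``direction'' data that the circuit preserves.

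Next, I would fix a fractional matching $x \in [0,1]^E$ satisfying $\sum_{e \ni v} x_e \le 1$, chosen to maximize the total weight $\sum_e w_e x_e$. For each edge $e=(i,j)$ I would then apply a two-qubit unitary $U_e$, parametrized by an angle that depends on $x_e$ and on a single global tuning parameter $t$, which partially rotates the pair $(i,j)$ toward $\ket{\phi_+}$. The crucial property of the circuit I would verify is the one advertised in the introduction: after $U_e$ is applied, the reduced single-qubit states on $i$ and $j$ still have Bloch vectors pointing along $\vec{b}_i$ and $\vec{b}_j$, with magnitudes merely rescaled by some factor $\gamma(x_e,t) \in [0,1]$. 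Care is required to define the $U_e$ so that their actions on overlapping endpoints compose consistently; since the fractional matching does not saturate any vertex, I expect this to amount to a commuting product of local rotations.

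With the direction-preservation property in hand, the energy of every edge of $G$ factors cleanly. Edges $e$ inside the support of $x$ receive a contribution of the form $\langle h^{EPR} \rangle_{\mathrm{prod},e} + c(t) x_e (2 - \langle h^{EPR} \rangle_{\mathrm{prod},e})$, while every other edge $(k,\ell)$ keeps its product-state expression but with $\vec{b}_k, \vec{b}_\ell$ replaced by their rescaled versions. Summing over all edges, the algorithm's energy takes the schematic form $\ell(G) = A(t)\cdot E_{\mathrm{prod}}(G) + B(t) \cdot \maxFM(G)$, where $\maxFM(G) = \sum_e w_e x_e$. I would then bound each term against the SDP value $u(G)$: $E_{\mathrm{prod}}(G)$ by the standard product-state analysis for EPR, and $\maxFM(G)$ by a monogamy argument on the SDP moment matrix (the same kind of argument underlying the matching-based algorithms of \cite{anshu2020,king2023}).

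The main obstacle I expect is the joint optimization. To beat the two individual bounds, the inequality $\ell(G)/u(G) \ge \varphi/2$ must be established uniformly over the two-parameter family (per-vertex matching load, per-edge product-state energy) rather than separately. The golden ratio should emerge from the extremal instance where these two bounds are simultaneously tight, producing an optimization whose first-order condition is the quadratic $t^2 + t - 1 = 0$ (solved by $t = 1/\varphi$); plugging this back in gives the target ratio $\varphi/2$. Checking that there is no ``bad'' graph that forces a different trade-off, and reconciling the answer with the bound of \cite{ju2025}, is where I would expect the real work to lie.
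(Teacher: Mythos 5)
Your blueprint---maximum-weight fractional matching, a commuting product of partial rotations toward $\ket{\phi_+}$, direction-preserving marginals, one tunable parameter, a monogamy upper bound---matches the paper's approach (\cref{alg:matching_algo_epr}) in outline, but the quantitative core has a genuine gap: the claim that after the circuit the marginal of qubit $i$ is rescaled by a factor $\gamma(x_e,t)$ depending only on the edge $e=(i,j)$, and the resulting linear form $\ell(G)=A(t)\cdot E_{\mathrm{prod}}(G)+B(t)\cdot \maxFM(G)$. Under the actual circuit $\prod_{(u,v)}e^{i\gamma_{uv}P_uP_v}$ with $P=(X-Y)/\sqrt{2}$, the Bloch vector of qubit $i$ is attenuated by the \emph{product} $\prod_{k\in N(i)}\cos 2\gamma_{ik}$ over \emph{all} incident edges, so the energy earned on an edge $(i,j)$---matched or not---depends multiplicatively on the fractional load of every other edge at $i$ and at $j$ (this is exactly \cite[Lemma 9]{king2023}, \cref{eq:epr_single_edge_energy}). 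Your matched-edge formula $\langle h^{EPR}\rangle_{\mathrm{prod},e}+c(t)\,x_e\,(2-\langle h^{EPR}\rangle_{\mathrm{prod},e})$ ignores this cross-attenuation, and the total energy is not linear in the $x_e$, so no choice of $A(t),B(t)$ realizes your schematic form. (Two smaller slips: a maximum-weight fractional matching typically \emph{does} saturate vertices, and commutativity has nothing to do with slack---all rotations are generated by the same single-qubit operator $P$, so $[P_uP_v,P_uP_w]=0$ automatically. Also, for EPR the SDP machinery is unnecessary: the paper takes $\rho_{PROD}=\ket{0}^{\otimes n}$, earning exactly $w_e$ per edge, and uses the purely combinatorial bound $\lambda_{max}(H_G)\le W_G+\maxFM_G$ of \cref{lem:fractional_monogamy}; the symmetric all-zeros start is what makes King's closed-form per-edge energy bound applicable.)

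The missing idea that closes the gap is an exponential reparametrization of the angles: set $\cos 2\gamma_{uv}=\exp\ofb{-\theta\, m_{uv}}$ as in \cref{eq:epr_frac_match_gamma_assignment}, so the product of attenuation factors at vertex $i$ becomes $\exp\ofb{-\theta\sum_{k\in N(i)\setminus\{j\}}m_{ik}}\ge \exp\ofb{-\theta(1-m_{ij})}$ by the matching constraint. This decouples the edges: each edge earns at least $T(\theta,m_{ij})$ of \cref{eq:epr_fm_algo_energy}, while the upper bound contributes $w_{ij}(1+m_{ij})$ per edge, reducing the theorem to the scalar problem $\min_{x\in[0,1]}T(\theta,x)/(1+x)$. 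With $\theta=\frac{1}{2}\ln\varphi$ this minimum is exactly $\varphi/2$ (\cref{lemma:epr_numerical_inequality}), and the extremal structure differs from your prediction: the minimum is attained simultaneously at the two endpoints $x=0$ and $x=1$ (both evaluate to $\varphi/2$ via $\varphi^2=\varphi+1$), with the single-edge graph as a tight instance, rather than at an interior stationary point solving $t^2+t-1=0$. In short, a per-edge worst-case ratio against a per-edge-decomposable bound, enabled by the exponential choice of angles, replaces your global two-term trade-off; without that device the nonlinearity of the attenuation leaves your decomposition unprovable.
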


\begin{theorem}
\label{thm:qmc_improved}
    For QMC, $\alpha \ge 0.611$.
\end{theorem}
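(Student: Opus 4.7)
The plan is to analyze the three-ingredient algorithm sketched in the introduction: (i) a product-state rounding of a level-2 SDP relaxation as in \cite{parekh2021, parekh2022}, (ii) a matching $M$ chosen with respect to rescaled edge weights, and (iii) a partial entangling rotation on each matched pair, controlled by a single parameter $t\in[0,1]$.

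First, I solve the level-2 quantum Lasserre/NPA relaxation to obtain an upper bound $U \ge \lambda_{max}(H_G)$ together with Bloch vectors $\{\vec{b}_i\}_{i\in V}$ extracted by the standard product-state rounding; the product state $\bigotimes_i \ket{v_i}$ has per-edge energy $\tfrac{1}{2}(1-\vec{b}_i\cdot\vec{b}_j)$. Next, I compute a maximum-weight matching $M$ with respect to the rescaled weights $\tilde{w}_{ij} \defeq w_{ij}\,\phi(\vec{b}_i\cdot\vec{b}_j)$, where $\phi$ is a tunable nonnegative function chosen so that $M$ preferentially covers edges where partial entangling yields the largest gain (i.e., edges with $\vec{b}_i\cdot\vec{b}_j$ close to $+1$, whose product-state energy is near $0$).

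For each matched pair $(i,j)\in M$, I replace $\ket{v_i}\otimes\ket{v_j}$ by the two-qubit mixed state
\begin{align*}
    \rho_{ij}(t) \defeq (1-t)\,\ket{v_i}\bra{v_i}\otimes\ket{v_j}\bra{v_j} \;+\; t\,\ket{\psi_-}\bra{\psi_-}.
\end{align*}
Because $\ket{\psi_-}$ is $SU(2)$-invariant, the single-qubit reduced states of $\rho_{ij}(t)$ have Bloch vectors $(1-t)\vec{b}_i$ and $(1-t)\vec{b}_j$: each direction is preserved and only the magnitude shrinks. This is exactly the structural property emphasized in the introduction, and it makes every non-matching edge evaluable in closed form. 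Writing $k_{ij}\in\{0,1,2\}$ for the number of endpoints of $(i,j)$ that lie in $M$, the algorithm's expected energy on edge $(i,j)$ is
\begin{align*}
    e_{ij}(t) = \begin{cases} 2t + \tfrac{1-t}{2}\bigl(1-\vec{b}_i\cdot\vec{b}_j\bigr) & \text{if } (i,j)\in M,\\[2pt] \tfrac{1}{2}\bigl(1-(1-t)^{k_{ij}}\,\vec{b}_i\cdot\vec{b}_j\bigr) & \text{if } (i,j)\notin M. \end{cases}
\end{align*}

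To derive $Z(t)/U \ge 0.611$, where $Z(t)=\sum_{(i,j)\in E}w_{ij}\,e_{ij}(t)$, I follow the monogamy-of-entanglement template used in prior matching-plus-product analyses \cite{parekh2021, parekh2022, lee2024}: partition the edges by their SDP contribution and by the product-state inner product $\vec{b}_i\cdot\vec{b}_j$, use integrality of the fractional matching polytope (or a suitable LP relaxation thereof) to bound the fraction of SDP value that $M$ covers in each of the $k_{ij}$ categories, and aggregate the per-edge inequalities. The resulting ratio is a low-dimensional numerical optimization in $t$ and in the choice of $\phi$; evaluating it produces the claimed $0.611$. The main obstacle will be the joint calibration of $(\phi,t)$: the rescaling must be aggressive enough to route $M$ onto edges where partial entangling helps most, yet mild enough that the Bloch-vector shrinkage does not erase product-state contributions on non-matching edges incident to $M$. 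I expect the improvement over the prior $0.603$ bound of \cite{gribling2025} to come entirely from this calibration step, with the structural closed form for $e_{ij}(t)$ above doing the rest of the work.
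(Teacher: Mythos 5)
Your overall architecture (SDP product-state rounding, reweighted matching, partial entanglement preserving Bloch-vector directions, per-edge accounting by number of matched endpoints) matches the paper, and your formula for unmatched edges is exactly the paper's \cref{lem:qmc_energy_by_edge_all} with $\cos\theta$ playing the role of your $1-t$. But the two-qubit state you place on matched edges is wrong in a way that kills the constant: you use the \emph{classical mixture} $\rho_{ij}(t)=(1-t)\ket{v_i}\bra{v_i}\otimes\ket{v_j}\bra{v_j}+t\ket{\psi_-}\bra{\psi_-}$, whereas the paper's \cref{lem:qmc_energy_by_edge} constructs a \emph{pure} partially entangled state. The distinction is quantitative, not cosmetic. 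Fix the marginal shrink factor $c\in[0,1]$ (your $c=1-t$, the paper's $c=\cos\theta$). The paper's pure state earns matched-edge energy $\tfrac{1}{2}(1+\sqrt{1-c^2})(1-s)$, i.e.\ an excess over the product value of $\Theta(\sqrt{1-c})$ as $c\to 1$, while the Bloch-vector shrinkage costs only $1-c$: a first-order gain at second-order cost. Your mixture earns $2(1-c)+\tfrac{c}{2}(1-s)$, an excess of $(1-c)\tfrac{3-cs}{2}=\Theta(1-c)$ — gain and cost of the same order. Concretely, on the binding edges where matching weight concentrates ($s$ near $-1$, where the per-edge SDP bound approaches $2$ and the product state approaches ratio $1/2$), at the paper's operating point $c\approx 0.281$ the pure state gets $\approx 1.96$ versus your $\approx 1.72$. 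Since your matched-edge energy is linear in $t$ and your unmatched-edge energy with $s<0$ is convex decreasing in $t$, your state family is essentially dominated per-edge by the better of the two endpoints $\rho_{PROD}$ and $\rho_{MATCH}$, and best-of-those analyses cap out near $0.595$--$0.603$; the final numerical step you defer to ("evaluating it produces the claimed $0.611$") would not in fact produce $0.611$ for this family. The square-root advantage of the pure rotation is precisely the new ingredient.

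Two further ingredients of the paper's proof are absent from your sketch and are needed to land on the stated constant even with the correct state. First, the paper does not use only the generic monogamy template: reaching $0.611$ requires the strengthened monogamy-of-entanglement bound $\lambda_{max}(H_G)\le W_G+\tfrac{15}{14}\maxM_G$ of \cref{lem:strengthened_monogamy} (from \cite{gribling2025}); with the older constant one only gets $0.610$. Second, the final algorithm is not $\pmatch$ alone but the better of $\pmatch$ and $\match$ (\cref{alg:algo_qmc}), analyzed via the minimax reduction of \cref{lem:minimax_reduction} with a mixing weight $\mu\approx 0.861$ at $\theta\approx 1.286$; the pure $\match$ arm is what covers the regime of very negative $s$ where even partial entanglement at a fixed $\theta$ underperforms. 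Your reweighting instinct is right — the paper's rescaled weights are $\widetilde{w}_{ij}=w_{ij}\bigl(\tfrac{\sin\theta\,(1-t_{ij}(1+\sin\theta))}{2}\bigr)^+$, which is exactly the matched-minus-baseline gain, so your tunable $\phi$ should be instantiated as this gain function rather than calibrated freely — but the proof needs the pure-state lemma, the $d=14/15$ bound, and the best-of-two combination to close.
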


\begin{remark}
\cref{thm:epr_improved} was shown simultaneously and independently by \cite{ju2025} using a different algorithm. The algorithm of \cite{ju2025} also \emph{partially} entangles edges while preserving single-qubit marginals. However, we believe both our algorithm and analysis are simpler. \cref{thm:qmc_improved} requires an improved upper bound in addition to new algorithmic techniques. The technique of finding a matching with respect to rescaled edge weights was simultaneously and independently used by~\cite{gribling2025}; however, they use this as part of improved approximation algorithm for QMC on triangle-free graphs rather than general graphs. Our QMC approximation does rely on a strengthening by~\cite{gribling2025} of a class of upper bounds first used in the work~\cite{lee2024}. Without this strengthened bound, we still achieve an approximation ratio $\alpha \ge 0.610$.
\end{remark}

Our approximation algorithm finds a quantum state $\rho_G$ such that $\tr\of{H_G \rho_G} \ge \ell(G)$. For the EPR Hamiltonian, the analysis of our algorithm is optimal: there exist graphs $G$ (such as the single-edge graph) where
\begin{align*}
    \frac{\ell(G)}{u(G)} \le  \frac{\bra{\psi_G} H_G \ket{\psi_G}}{\lambda_{max}(G)} = \frac{1 + \sqrt{5}}{4}\,.
\end{align*}
For QMC, however, it is possible that a better analysis of this algorithm (particularly of $u(G)$) would give a larger approximation ratio. 

In the remainder of this work, we introduce necessary notation, describe each algorithm, and prove \Cref{thm:epr_improved,thm:qmc_improved}. 
We defer some technical lemmas to the appendix.

\section{Preliminaries}\label{sec:preliminaries}

\subsection{Graph theory}\label{sec:preliminaries/graph_theory}
Let $G(V,E,w)$ denote a graph with vertex set $V$, edge set $E$, and positive edge weights $w: E \rightarrow \mathbb{R}_+$. In this work, we always take $V=\ofb{n} \defeq \{1,2,\dots,n\}$. We let $W_G \defeq \sum_{(i,j) \in E} w_{ij}$. When $G$ is inferred by context, we simply write $W$. Define $N(v)$ as the set of neighbors of a vertex $v \in V$. For convenience, we index edges by either $e$ or $(i,j)$, or simply $ij$ in subscripts. 

A fractional \emph{matching} of a graph $G(V,E,w)$ is a function $m: E(G) \to [0,1]$ that assigns a value to each edge in $G$, such that the sum over values $\sum_{j\in N(i)} m_{ij}$ is at most $1$ for any vertex $i$. If $m_{ij} \in \{0,1\}$ for all edges $(i,j)$, we say that $m$ is an \emph{integral} matching. For an integral matching $m$, we say an edge $(i,j)$ is in the matching if $m_{ij}=1$.
The \emph{weight} $\textrm{Wt}(m)$ of a matching $m$ is defined as $\textrm{Wt}(m) \defeq \sum_{(i,j) \in E(G)} w_{ij} \, m_{ij}$.   We define $\maxFM_G$ (or $\maxM_G$) to be the maximum total weight of any fractional (or integral) matching of $G$. For convenience, we sometimes let $\maxM_G$ also denote the set of edges in the maximum-weight integral matching of $G$. 
When the graph is clear from context we drop the subscript $G$. Optimal integral and fractional matchings can be computed in polynomial time, for example with linear programming~\cite{edmonds1965}.

\subsection{Quantum computation}\label{sec:preliminaries/quantum_computation}
We refer to $\vec{\sigma}=\of{X,Y,Z}$ as the canonical Pauli matrices. As such, we can define the \emph{Bloch vector} $\bloch{\rho}$ of any $1$-qubit density matrix $\rho$ as the vector $(b_x, b_y, b_z)$ in the unit sphere $\textrm{S}^2$ such that 
    \begin{align*}
        \rho = \frac{1}{2} (I + \bloch{\rho} \cdot \vec{\sigma}) \defeq \frac{1}{2} (I + b_x X + b_y Y + b_z Z)\,.
    \end{align*}
Pure states correspond to unit Bloch vectors. 
Let $\ket{\psi_-} = \frac{1}{\sqrt{2}} \left(\ket{01} - \ket{10}\right)$ be the \emph{singlet state},  and  $\ket{\phi_+} = \frac{1}{\sqrt{2}} \left(\ket{00} + \ket{11}\right)$ be the \emph{EPR state}.

\subsection{Previous algorithms}\label{sec:preliminaries/previous_algorithms}
In our work, we use two previously known algorithms as subroutines:
\begin{enumerate}
    \item  The algorithm $\prd$ takes a graph $G(V,E,w)$ as an input, and outputs a good product state $\rho_{PROD}$. For EPR, we define $\rho_{PROD} \defeq \ket{0}^{\otimes n} \bra{0}^{\otimes n}$.
    For QMC, we define $\prd$ to be the output of the \cite{gharibian2019} rounding algorithm applied to the second level of the quantum moment-SOS hierarchy, as in~\cite{parekh2022} and~\cite{lee2024}.
    This is the only part of our algorithm for QMC that relies on an SDP. As described in~\cite[Sec 2.2]{parekh2022}, the output from the SDP can be interpreted as a \emph{pseudo-density matrix}. It is called a \emph{pseudo}-density matrix because it satisfies only some of the constraints of a valid density matrix.
    \item The algorithm $\match$ takes a graph $G(V,E,w)$ as an input, and outputs a product of 2-qubit states $\rho_{MATCH}$. It was first formally proposed in~\cite{lee2024}. The algorithm first finds a maximum-weight integral matching $\maxM_G$. For EPR (and for QMC), it outputs the tensor product of $\ket{\phi_+}_{ij}\bra{\phi_+}_{ij}$ (and $\ket{\psi_-}_{ij}\bra{\psi_-}_{ij}$ for QMC) for every pair of vertices $(i,j)$ in the matching, and the maximally mixed state for every vertex $i$ not in the matching. This obtains energy $2$ on matched edges and $1/2$ on unmatched edges, for a total of $(3\maxM_G+W_G)/2$.
\end{enumerate}

\section{Our Algorithms}\label{sec:algorithms}

To motivate our algorithms, consider the following simple graph $G$: 
\vspace{2em}
\begin{center}
\begin{tikzpicture}[scale=1.5]
  \node[draw=blue!50, fill=blue!50, circle, inner sep=4pt, label=below:$a$] (a) at (0,0) {};
  \node[draw=blue!50, fill=blue!50, circle, inner sep=4pt, label=below:$b$] (b) at (2,0) {};
  \node[draw=blue!50, fill=blue!50, circle, inner sep=4pt, label=below:$c$] (c) at (4,0) {};
  \draw[black] (a) -- (b);
  \draw[black] (b) -- (c);
\end{tikzpicture}
\end{center}
\vspace{.5em}
This graph is bipartite, so QMC and EPR are equivalent under local rotations~\cite{king2023}. We thus focus on EPR for simplicity. It is easy to compute that $\lambda_{max}(H_G)=3$. The optimal product state $\rho_{PROD}$ is $\rho_0^{\otimes 3}$, where $\rho_0 \defeq \ket{0}\bra{0}$, and achieves energy $2$. The algorithm $\match$ computes the maximum matching $\maxM_G = \ofc{(a,b)}$ and returns the state $\rho_{MATCH}$ as described in \cref{sec:preliminaries}. This state gains energy $2$ on edge $(i,j)$ and $1/2$ on edge $(b,c)$, achieving total energy $5/2$. Thus, the algorithm of~\cite{lee2024}, which returns the better of $\rho_{PROD}$ and $\rho_{MATCH}$, achieves $5/6\approx 0.833$ of the optimal energy. 

Our algorithms do better by interpolating between $\rho_{PROD}$ and $\rho_{MATCH}$, rather than taking the better of the two. For our example $G$, the unitary
\begin{align*}
    U = e^{i \theta \of{\frac{X_a-Y_a}{\sqrt{2}}\otimes \frac{X_b-Y_b}{\sqrt{2}}}}\,,
\end{align*}
takes $\rho_{PROD}$ to $\rho'$, a tensor product of a two-qubit state $\rho_{ab}$ and a single-qubit state $\rho_{c}$. The parameter $\theta$ sets the entanglement for $\rho_{ab}$: when $\theta=0$, $\rho_{ab}$ is a product state; when $\theta=\pi/4$, $\rho_{ab}$ is fully rotated into the EPR state. As such, we view this approach as smoothly interpolating between $\rho_{PROD}$ and $\rho_{MATCH}$ in superposition. It can be easily verified that the energy obtained by $\rho'$ on edge $(a,b)$ is $\tr\ofb{h^{EPR}_{ab}\rho'} = 1+2\cos\theta\sin\theta$, and the single qubit marginals of $a$ and $b$ are given by 
\begin{align*}
    \rho'_a \defeq \tr_{bc}\ofb{\rho'} = \cos{2\theta}\rho_0 + 2 \sin^2{\theta}I = \rho'_b\,.
\end{align*}
In particular, note that the single qubit marginals are simply rescaled and shifted by the identity. This fact aids in the analysis:  we can still evaluate the energy of $(b,c)$ in terms of $\rho_{PROD}$, even though the edge is not in our matching:

\begin{align*}
    \tr\ofb{h^{EPR}_{bc} \rho'} &= \tr\ofb{h^{EPR}_{bc} \of{\rho'_b \otimes \rho'_c}}\\
    &= \tr\ofb{h^{EPR}_{bc} \of{\rho'_b \otimes \rho_0}} \\
    &= \tr\ofb{h^{EPR}_{bc} \of{\of{\cos{2\theta}\rho_0 + 2 \sin^2{\theta}\,\of{I/2}} \otimes \rho_0}} \\
    &= \tr\ofb{h^{EPR}_{bc} \of{\cos{2\theta}\rho_{PROD} +  \sin^2{\theta}\,I \otimes \rho_0}} \\
    &=\cos{2\theta}+\sin^2\theta = \cos^2\theta\,.
\end{align*}

Thus the total energy is given by 
\begin{align*}
\tr\ofb{(h^{EPR}_{ab}+h^{EPR}_{bc}) \rho'}=1+2\cos\theta\sin\theta+\cos^2\theta\,.
\end{align*}
The parameter $\theta$ can be optimized over; in this example, taking $\theta\approx .554$ yields
\begin{align*}
    \tr\ofb{(h^{EPR}_{ab}+h^{EPR}_{bc}) \rho'} \approx 2.618,
\end{align*}
outperforming both $\rho_{PROD}$ and $\rho_{MATCH}$.

In this example, the analysis simplifies nicely because the initial product state is the symmetric state $\rho_0^{\otimes n}$, and because the graph is bipartite. For EPR on general graphs, we provide a circuit-based algorithm that prepares an entangled state based on a \emph{fractional} matching in the graph. This state still works by \emph{partially} entangling edges according to the matching while preserving single-qubit marginals, but it does not have a simple interpretation in terms of $\rho_{MATCH}$. We leave open the possibility that a $(\frac{1+\sqrt{5}}{4})$-approximation algorithm can be achieved with a tensor product of one and two-qubit states that simply interpolates between $\rho_{PROD}$ and $\rho_{MATCH}$. For QMC, our algorithm does indeed interpolate between $\rho_{PROD}$ and $\rho_{MATCH}$. However the algorithm and analysis become slightly more complicated. Crucial to our analysis is the following lemma:

\begin{lemma}[Energy obtained by QMC algorithm on matched and unmatched edges]\label{lem:qmc_energy_by_edge}
Given two $1$-qubit pure states $\rho_i, \rho_j$ and a real parameter $\theta \in [0,\pi/2]$, there exists a $2$-qubit pure state $\rho_{ij}$ such that 
\begin{align*}
            \tr\ofb{{h^{QMC} \rho_{ij}}} &= \frac{
            (1 + \sin{\theta}) \left(1 - \bloch{\rho_i} \cdot \bloch{\rho_j}\right)}{2} \,,\\
        \bloch{\tr_j\ofb{\rho_{ij}}} &= \cos{\theta} \cdot \bloch{\rho_i}  \,,\\
        \bloch{\tr_i\ofb{\rho_{ij}}} &= \cos{\theta} \cdot \bloch{\rho_j}  \,.
\end{align*}
\end{lemma}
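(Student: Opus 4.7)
The plan is to exhibit $\rho_{ij} = \ket{\Psi(\theta)}\bra{\Psi(\theta)}$ for a one-parameter family of pure states adapted to the two inputs. Since $\rho_i, \rho_j$ are pure, let $\ket{\phi_i}, \ket{\phi_j}$ be their state vectors and $\ket{\phi_i^\perp}, \ket{\phi_j^\perp}$ orthogonal companions on each qubit. I would use the Schmidt-form ansatz
\[
\ket{\Psi(\theta)} \defeq \cos(\theta/2)\,\ket{\phi_i}\ket{\phi_j} \,+\, \sin(\theta/2)\,e^{i\gamma}\,\ket{\phi_i^\perp}\ket{\phi_j^\perp},
\]
with the phase $\gamma$ to be chosen below. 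This family visibly interpolates between the product state $\rho_i \otimes \rho_j$ at $\theta=0$ and a maximally entangled state at $\theta = \pi/2$.

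The marginals are immediate. Tracing out qubit $j$, the cross terms $\ket{\phi_i}\bra{\phi_i^\perp}$ and their conjugates vanish because $\langle \phi_j^\perp | \phi_j \rangle = 0$, leaving $\cos^2(\theta/2)\,\ket{\phi_i}\bra{\phi_i} + \sin^2(\theta/2)\,\ket{\phi_i^\perp}\bra{\phi_i^\perp}$. Since $\ket{\phi_i^\perp}\bra{\phi_i^\perp}$ has Bloch vector $-\bloch{\rho_i}$, this mixture has Bloch vector $(\cos^2(\theta/2) - \sin^2(\theta/2))\,\bloch{\rho_i} = \cos\theta\cdot\bloch{\rho_i}$. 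The marginal on qubit $i$ is handled identically, giving $\cos\theta\cdot\bloch{\rho_j}$.

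For the energy I would use the identity $h^{QMC} = I - S$, where $S$ is the swap; this follows from $S = I - 2\ket{\psi_-}\bra{\psi_-}$ together with $h^{QMC} = 2\ket{\psi_-}\bra{\psi_-}$. Expanding $\bra{\Psi} S \ket{\Psi}$ via $S\ket{ab} = \ket{ba}$ yields two diagonal contributions, each equal to $|\langle \phi_i | \phi_j \rangle|^2 = |\langle \phi_i^\perp | \phi_j^\perp \rangle|^2 = p$, where $p \defeq (1+\bloch{\rho_i}\cdot\bloch{\rho_j})/2$ (using the standard pure-state identity and $\bloch{\phi^\perp} = -\bloch{\phi}$), so the diagonal part sums to $p$. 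The off-diagonal contribution equals $\sin\theta\cdot\operatorname{Re}\bigl[e^{i\gamma}\,\langle\phi_i|\phi_j^\perp\rangle\langle\phi_j|\phi_i^\perp\rangle\bigr]$, and the bracketed product has magnitude exactly $1-p$ and some phase $\delta$ depending on the conventional choice of the orthogonal companions.

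The final step is to set $\gamma = \pi - \delta$ so the off-diagonal term becomes $-\sin\theta\,(1-p)$. Then $\bra{\Psi} S \ket{\Psi} = p - \sin\theta\,(1-p)$, and
\[
\tr\ofb{h^{QMC}\,\rho_{ij}} \,=\, 1 - \bra{\Psi} S \ket{\Psi} \,=\, (1-p)(1+\sin\theta) \,=\, \frac{(1+\sin\theta)\bigl(1 - \bloch{\rho_i}\cdot\bloch{\rho_j}\bigr)}{2},
\]
matching the claim. I do not expect a substantial obstacle; the only delicate point is phase bookkeeping, since $\ket{\phi_i^\perp}$ and $\ket{\phi_j^\perp}$ are each defined only up to a global phase. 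Those ambiguities shift $\delta$ but are absorbed into the free parameter $\gamma$, while the marginals and the diagonal part of $\bra{\Psi} S \ket{\Psi}$ are manifestly phase-independent.
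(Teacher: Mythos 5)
Your proof is correct, and it takes a genuinely different route from the paper's. The paper never writes down a state vector: it works in the Bloch-matrix representation of two-qubit states from \cite{gamel2016}, expresses the energy as $(1-\tr\ofb{R})/2$ in terms of the correlation matrix $R$, invokes the SVD characterization of pure two-qubit Bloch matrices ($R = U\Sigma V^T$ with $\Sigma = \mathrm{diag}(1,\sin\theta,\sin\theta)$, marginals $Ug$ and $Vh$ with $g=h=(\cos\theta,0,0)^T$, and $\det(U)\det(V)=-1$), and then constructs explicit rotations aligning the marginals, yielding $\tr\ofb{R} = t - \sin\theta\,(1-t)$ and hence the stated energy. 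Your Schmidt-form ansatz $\cos(\theta/2)\ket{\phi_i}\ket{\phi_j} + e^{i\gamma}\sin(\theta/2)\ket{\phi_i^\perp}\ket{\phi_j^\perp}$ combined with the swap identity $h^{QMC} = I - S$ reaches the same conclusion by a direct, self-contained calculation: your marginal computation is manifestly $\gamma$-independent; the diagonal part of $\bra{\Psi}S\ket{\Psi}$ is indeed $p$ since $|\braket{\phi_i|\phi_j}|^2 = |\braket{\phi_i^\perp|\phi_j^\perp}|^2 = p$; and the off-diagonal product has modulus $1-p$ since $|\braket{\phi_i|\phi_j^\perp}|^2 = |\braket{\phi_j|\phi_i^\perp}|^2 = 1-p$, so choosing $\gamma = \pi - \delta$ gives $\bra{\Psi}S\ket{\Psi} = p - \sin\theta\,(1-p)$ as claimed (in the degenerate case $p=1$ the phase $\delta$ is undefined, but the off-diagonal term vanishes for every $\gamma$, so the formula still holds). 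The two pictures are linked under the hood — Schmidt coefficients $\cos(\theta/2),\sin(\theta/2)$ produce correlation-matrix singular values $1$ and $2\cos(\theta/2)\sin(\theta/2) = \sin\theta$ (twice), and your tunable phase $\gamma$ plays the role of the paper's determinant condition $\det(U)\det(V) = -1$ — but your argument needs no external characterization theorem and makes the interpolation between $\rho_i \otimes \rho_j$ and a maximally entangled state explicit at the level of the wavefunction. What the paper's formulation buys in exchange is the embedding of the construction into an optimization over all valid Bloch matrices, which underlies its remark that numerical evidence suggests the constructed state cannot be improved; for the lemma as stated, your proof is complete and arguably more elementary.
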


The last two constraints in \Cref{lem:qmc_energy_by_edge} indicate that the Bloch vectors of the single-qubit marginals of $\rho_{ij}$ are rescaled Bloch vectors of $\rho_i$ and $\rho_j$, respectively. This is analogous to the argument in our example, where the marginals are rescaled and shifted by the identity. This again allows us to compute the energy of unmatched edges in terms of $\rho_{PROD}$. The parameter $\theta\,$ again sets the entanglement in $\rho_{ij}$: when $\theta = 0$, $\rho_{ij}$ is a product state; when $\theta = \pi/2$, $\rho_{ij}$ is maximally entangled. 
The proof of \Cref{lem:qmc_energy_by_edge} is deferred to \Cref{apx:proom_om_lemma_qmc_energy_by_edge}.

\subsection{EPR}\label{sec:algorithms/epr}

Using the intuition from our example, we introduce the following algorithm for EPR:
\begin{algorithm}[Fractional matching algorithm for EPR]\label{alg:matching_algo_epr}
\vspace{.5em}
    Given a weighted graph  $G(V, E, w)$:
    \begin{enumerate}
    \item Define the $1$-qubit Hamiltonian
    \begin{align*}
        P \defeq \frac{1}{\sqrt{2}}(X - Y)\,,
    \end{align*}
    and the angle
    \begin{align}\label{eq:epr_frac_specialtheta}
        \theta \defeq \frac{\ln\of{\varphi}}{2}  \approx 0.240\,,
    \end{align}
    where $\varphi = \frac{1+\sqrt{5}}{2}$ is the golden ratio.
        \item Find a fractional matching $(m_{uv})_{(u,v)\in E}$ of maximum weight (e.g., via linear programming).
        \item Output the state 
    \begin{align}\label{eq:epr_chi}
        \ket{\chi} \defeq \prod_{(u,v) \in E} e^{i \gamma_{uv} P_u P_v} \ket{0}^{\otimes n}\,,
    \end{align}
    where 
\begin{align}\label{eq:epr_frac_match_gamma_assignment}
    \gamma_{uv} &= \frac{1}{2} \cos^{-1} \exp\ofb{-\theta \cdot m_{uv}}  \  \in  \ [0, \pi/2] \,.\end{align} 
    \end{enumerate}
\vspace{.5em}
\end{algorithm}
The unitary $e^{i \gamma P_u P_v}$ rotates the state $\ket{00}_{ij}$ towards the Bell state $\ket{\phi^+}$; this circuit was proposed in the approximation algorithm of \cite{king2023}.

\subsection{QMC}\label{sec:algorithms/qmc}
Our algorithm for QMC relies on the following subroutine, which we call $\pmatch$ since it places partially entangled 2-qubit states on the edges of a matching, while $\match$ places maximally entangled states:
\begin{algorithm}[$\pmatch$]\label{alg:matching_algo_qmc}
Given a weighted graph $G(V, E, w)$ and a real parameter $\theta \in \ofb{0, \pi/2}$: 
\vspace{.5em}
    \begin{enumerate}
        \item Prepare the state $\rho_{PROD} = \bigotimes_{i \in [n]} \rho_i$ by running the $\prd$ algorithm on QMC from \cref{sec:preliminaries}.
        \item For each edge $(i,j) \in E$, compute $t_{ij} \defeq \bloch{\rho_i} \cdot \bloch{\rho_j}$ from $\rho_{PROD}$.
        \item Find a maximum-weight integral matching $\widetilde{M}\subseteq E$ on the \emph{reweighted} graph $G(V, E', \widetilde{w})$, where 
        \begin{align}\label{eq:qmc_algo_rescaled_weights}
            \widetilde{w}_{ij} = w_{ij}  \cdot \left(  \frac{\sin{\theta}\of{1-t_{ij}\of{1+\sin{\theta}}}}{2} \right)^+\,,
        \end{align}
        and $E' = \{(i,j) \in E\ |\ \widetilde{w}_{ij} > 0\}$. Here, we use the notation $(\cdot)^+ \defeq \max(0, \cdot)$ from \cite{lee2024}.
        \item Let $\rho_{PMATCH}\of{\theta}$ be the state starting from $\rho_{PROD}$, but for each edge $(i,j)$ in $\widetilde{M}$, replace $\rho_i \otimes \rho_j$ with the state $\rho_{ij}$ described in \cref{lem:qmc_energy_by_edge}, parametrized by $\rho_i, \rho_j, \theta$.
        \item Output the state $\rho_{PMATCH}\of{\theta}$.
    \end{enumerate}
\vspace{.5em}
\end{algorithm}

Our algorithm for QMC chooses the better of the algorithms $\pmatch$ and $\match$:

\begin{algorithm}[Combined algorithm for QMC]\label{alg:algo_qmc}
\vspace{.5em}
    Given a weighted graph $G(V, E, w)$:
    \begin{enumerate}
        \item Prepare the state $\rho_{MATCH}$ by running the $\match$ algorithm from \cref{sec:preliminaries}.
        \item Prepare the state $\rho_{PMATCH}\of{\theta}$ by running $\pmatch$ with parameter $\theta=1.286$.
        \item Output the state out of $\rho_{MATCH}$ and $\rho_{PMATCH}(\theta)$ obtaining larger energy on the QMC Hamiltonian $H_G$.
    \end{enumerate}
\vspace{.5em}
\end{algorithm}

\section{Analysis}\label{sec:analysis}

We upper-bound the maximum energy $\lambda_{max}(H_G)$ of QMC and EPR Hamiltonians using a quantifiable \emph{monogamy of entanglement}:
\begin{lemma}[{Monogamy of Entanglement, e.g.~\cite[Lemma 1]{lee2024}}]\label{lem:fractional_monogamy}
    For both EPR and QMC, and for all graphs $G(V,E,w)$, we have
    \begin{align*}
        \lambda_{max}(H_G) \le W_G + \maxFM_G\,.
    \end{align*}
\end{lemma}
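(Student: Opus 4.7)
The plan is to reduce each inequality to a monogamy-of-entanglement operator bound and then observe that the bound directly supplies a valid fractional matching. I focus on QMC; the EPR case will then follow by a local unitary.

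Since $h^{QMC}_{ij} = 2\ket{\psi_-}\bra{\psi_-}_{ij}$, for any $n$-qubit state $\rho$ we have $\tr\ofb{H_G \rho} = \sum_{(i,j) \in E} 2 w_{ij} F_{ij}$, where $F_{ij} \defeq \bra{\psi_-}_{ij}\rho_{ij}\ket{\psi_-}_{ij}$ is the singlet fidelity of the reduced state on edge $(i,j)$. Rearranging gives $\tr\ofb{H_G \rho} - W_G = \sum_e w_e (2F_e - 1)$, so it suffices to produce a fractional matching $(m_e)_e$ of $G$ with $\sum_e w_e(2F_e - 1) \le \sum_e w_e m_e$, because that sum is by definition at most $\maxFM_G$. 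I would take $m_e \defeq (2F_e - 1)^+$; each $m_e$ clearly lies in $[0,1]$ and upper-bounds $2F_e - 1$ pointwise.

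The crux is then the matching-degree constraint $\sum_{j \in N(i)} m_{ij} \le 1$ at every vertex $i$. Restricting to $S \defeq \{j \in N(i) : 2F_{ij} > 1\}$, this reduces to the operator inequality
\begin{align*}
\sum_{j \in S} \ket{\psi_-}\bra{\psi_-}_{ij} \;\preceq\; \frac{|S| + 1}{2}\, I
\end{align*}
on the Hilbert space of qubits $\{i\} \cup S$, for every $S \subseteq N(i)$. Using $\ket{\psi_-}\bra{\psi_-}_{ij} = \tfrac{1}{4}(I - \vec{\sigma}_i \cdot \vec{\sigma}_j)$, the left-hand side equals $\tfrac{|S|}{4} I - \vec{S}_i \cdot \vec{S}_S$, where $\vec{S}_A \defeq \tfrac{1}{2}\sum_{k \in A} \vec{\sigma}_k$ is the total spin of subsystem $A$. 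From $\vec{S}_i \cdot \vec{S}_S = \tfrac{1}{2}(\vec{S}_{\mathrm{tot}}^2 - \vec{S}_i^2 - \vec{S}_S^2)$ together with the Clebsch--Gordan decomposition of $\{i\} \cup S$, the smallest eigenvalue of $\vec{S}_i \cdot \vec{S}_S$ is $-(|S|+2)/4$, attained when $\vec{S}_S^2$ is maximal and the total spin is minimal relative to it. This supplies exactly the claimed bound.

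For EPR, conjugating qubit $i$ by $Y_i$ sends $\ket{\psi_-}\bra{\psi_-}_{ij}$ to $\ket{\phi_+}\bra{\phi_+}_{ij}$ for every $j$, because $(Y_i \otimes I_j)\ket{\psi_-}_{ij} = i\ket{\phi_+}_{ij}$. Since unitary conjugation on $\mathcal{H}_{\{i\} \cup S}$ preserves the operator inequality, the same monogamy bound holds with $\ket{\phi_+}$ in place of $\ket{\psi_-}$, and the remainder of the argument transfers verbatim. The only genuinely nontrivial step is the spin-algebra monogamy bound above; everything else is rearrangement and the definition of $\maxFM_G$.
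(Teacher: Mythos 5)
Your proof is correct, and it is essentially the standard argument behind the cited result: the paper does not prove this lemma itself but imports it from \cite[Lemma 1]{lee2024}, whose proof likewise extracts a fractional matching $m_e = (2F_e-1)^+$ from the edge fidelities and verifies the degree constraints via the star monogamy bound. Your verification of that bound checks out in every detail --- the operator inequality $\sum_{j\in S}\ket{\psi_-}\bra{\psi_-}_{ij}\preceq \frac{|S|+1}{2}\,I$ follows exactly as you say from the minimum eigenvalue $-(|S|+2)/4$ of $\vec{S}_i\cdot\vec{S}_S$ at maximal $\vec{S}_S^2$ and minimal total spin, and the single-site $Y_i$ conjugation (using $(Y_i\otimes I_j)\ket{\psi_-}_{ij}=i\ket{\phi_+}_{ij}$) cleanly transfers the per-vertex bound to EPR without any bipartiteness assumption --- so there is nothing substantive to flag.
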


For QMC, our analysis improves with better upper bounds on $\lambda_{max}(H_G)$. We use an inequality of the form $W_G + \maxM_G/d$;  this first appears in \cite[Lemma 4]{lee2024} with the constant $d=4/5$. By a detailed analysis and numerical verification on graphs with up to $13$ vertices, this constant was recently improved to $14/15$ in \cite{gribling2025}:
\begin{lemma}[{Strengthened monogamy of entanglement, \cite[Lemma 3.10]{gribling2025}}]\label{lem:strengthened_monogamy} 
    For QMC, and for all graphs $G(V,E,w)$, we have
    \begin{align*}
         \lambda_{max}(H_G) \le W_G + \frac{\maxM_G}{d}\,,
    \end{align*}
    where $d = \frac{14}{15}$.
\end{lemma}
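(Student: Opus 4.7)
The plan is to bound $\lambda_{max}(H_G)$ via the second level of the quantum moment-SOS (Lasserre) hierarchy, which provides a natural SDP relaxation of QMC. Since any level-$2$ pseudo-expectation $\widetilde{\mathbb{E}}$ satisfies $\widetilde{\mathbb{E}}[H_G] \ge \lambda_{max}(H_G)$, it suffices to prove $\widetilde{\mathbb{E}}[H_G] \le W_G + \maxM_G/d$ with $d = 14/15$ for every such $\widetilde{\mathbb{E}}$.

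Decomposing each local term as $h^{QMC}_{ij} = \tfrac{1}{2}I - \tfrac{1}{2}(X_i X_j + Y_i Y_j + Z_i Z_j)$, the identity piece contributes exactly $W_G$, so the task reduces to bounding the correlation piece $-\tfrac{1}{2}\sum_{(i,j)\in E} w_{ij}\, \widetilde{\mathbb{E}}[X_i X_j + Y_i Y_j + Z_i Z_j]$ by $\maxM_G/d$. Building on \Cref{lem:fractional_monogamy}, the level-$2$ SDP imposes monogamy at each vertex $v$: the star of Pauli correlators incident to $v$ must be consistent with a valid reduced density matrix on $v$ and its neighborhood. LP duality against these vertex-local constraints already yields the weaker bound $\maxFM_G$; sharpening $\maxFM_G$ to $\maxM_G/d$ requires exploiting the integrality gap of matchings.

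The strategy is to inject additional SDP-valid inequalities corresponding to odd-set constraints in the Edmonds matching polytope. Each such inequality---on an odd cycle, odd wheel, or similar small gadget---can be certified by a level-$2$ SDP computation on the induced subgraph, because the gap between $\maxFM$ and $\maxM$ is generated exactly by odd-set structures. One then combines the vertex-local monogamy certificate with carefully chosen dual multipliers on the odd-set certificates to obtain a single dual feasibility witness for $W_G + \maxM_G/d$. The constant $d = 14/15$ is likely pinned down by identifying the worst-case small graph---presumably a short odd cycle or a specific small gadget---for which the ratio is saturated; this is consistent with the verification up to $13$ vertices described in~\cite{gribling2025}.

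The main obstacle is the gluing step: combining local SDP certificates across overlapping subgraphs of $G$ without sacrificing the constant $14/15$. Unlike the purely vertex-local argument behind \Cref{lem:fractional_monogamy}, which glues transparently via LP duality on the fractional matching polytope, odd-set inequalities can share vertices, so one needs a charging scheme that assigns each edge's correlation budget to at most one such structure while respecting all simultaneous monogamy constraints. Proving that such a scheme exists globally---and extracting the tightest constant $d$ in closed form rather than by computer search---is the central technical challenge, and is precisely where the detailed numerical verification of~\cite{gribling2025} is used to settle $d = 14/15$.
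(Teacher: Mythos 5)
There is a genuine gap here, and it is worth noting first that the paper itself does not prove this lemma: it is imported wholesale from \cite{gribling2025} (their Lemma 3.10), so the relevant comparison is against that proof. Your proposal correctly identifies the frame --- bound the level-2 moment-SOS value, peel off the identity part contributing $W_G$, and close the gap between $\maxFM_G$ and $\maxM_G/d$ via Edmonds odd-set constraints --- but it is an outline, not a proof. Every load-bearing step is asserted or deferred: you never certify a single odd-set inequality against a level-2 pseudo-expectation, you say the constant $d=14/15$ is ``likely pinned down by identifying the worst-case small graph,'' and you explicitly flag your own gluing step as ``the central technical challenge.'' The claim that each scaled odd-set inequality is SDP-valid at level 2 \emph{is} the content of the lemma; assuming it, plus the citation to the numerics of \cite{gribling2025} to fix $d$, makes the argument circular rather than a reconstruction.

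You have also misidentified where the difficulty lies. The actual argument works in the primal, not via a glued dual witness: from the pseudo-density one defines a single edge vector with entries $x_{ij} = d\,(g_{ij}-1)^+$ (where $g_{ij}$ is the pseudo-energy on edge $(i,j)$, as in the paper's Section 4.2) and shows this one vector simultaneously satisfies the degree constraints (vertex-local monogamy, as in \cref{lem:fractional_monogamy}) and the odd-set constraints $\sum_{e \in E(S)} x_e \le (|S|-1)/2$ of Edmonds' matching polytope. Membership in the polytope then immediately yields $\sum_{(i,j)\in E} w_{ij}\,x_{ij} \le \maxM_G$, hence $\lambda_{max}(H_G) \le W_G + \maxM_G/d$; Edmonds' characterization does the global ``gluing'' for free, and no charging scheme assigning each edge to at most one odd structure is needed (indeed such a scheme would be too lossy, since odd sets overlap heavily). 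The genuinely hard part --- which your sketch leaves entirely open --- is the \emph{local} statement that the scaled odd-set inequalities are valid for level-2 pseudo-expectations of the QMC Hamiltonian; this is what \cite{gribling2025} establishes by a combination of analysis and verification on small odd sets (graphs up to $13$ vertices), with $d = \frac{2k}{2k+1}$ tied to the odd-set sizes handled. Without proving at least one nontrivial instance of that local validity, or explaining how $14/15$ emerges from it, the proposal does not constitute a proof of the lemma.
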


\subsection{EPR}\label{sec:analysis/epr}
We now prove \cref{thm:epr_improved}. Given the state in \cref{eq:epr_chi}
\begin{align}\label{eqn:king_form}
   \ket{\chi} \defeq \prod_{(i,j) \in E} e^{i \gamma_{ij} P_i P_j} \ket{0}^{\otimes n}\,,
\end{align}
\cite[Lemma 9]{king2023} showed that the energy on the local term $h^{QMC}_{ij}$ is at least
\begin{align}\label{eq:epr_single_edge_energy}
    \bra{\chi} h_{ij}^{QMC} \ket{\chi} \ge \frac{1}{2}\of{1 + A_{ij} B_{ij} + \sin 2\gamma_{ij} \cdot (A_{ij} + B_{ij})}\,,
\end{align}
where
\begin{align*}
     A_{ij} &\defeq  \prod_{k \in N(i) \setminus \{j\}} \cos 2\gamma_{ik}\,, \\
     B_{ij} &\defeq  \prod_{k \in N(j) \setminus \{i\}} \cos 2\gamma_{jk}\,.
\end{align*}
    
In \Cref{alg:matching_algo_epr}, the output state $\ket{\chi}$ is in the form of \cref{eqn:king_form}. Using the angles $\gamma_{ij}$ specified by \cref{eq:epr_frac_match_gamma_assignment}, we have
\begin{align}\label{eq:epr_a_b_ineq}
        A_{ij} &= \exp\ofb{-\theta \!\!\sum_{k \in N(i) \setminus \{j\}} \!\! m_{ik}} \ge \exp\ofb{-\theta (1 - m_{ij})}\,,\\
         B_{ij} &= \exp\ofb{-\theta \!\!\sum_{k \in N(j) \setminus \{i\}} \!\! m_{jk}} \ge \exp\ofb{-\theta (1 - m_{ij})}\,,\\
         \sin 2 \gamma_{ij} &= \sqrt{1 - \cos^2 2 \gamma_{ij}} = \sqrt{1 - \exp\ofb{-2 \theta m_{ij}}}\,.
\end{align}
The inequalities in the first two lines follow because $m$ is a matching. For example, we have $m_{ij} + \sum_{k \in N(i) \setminus \{j\}} m_{ik} \le 1$. Using \cref{eq:epr_single_edge_energy}, the energy of $\ket{\chi}$ on $h_{ij}^{QMC}$ is at least
\begin{align}\label{eq:epr_fm_algo_energy}
    T(\theta, m_{ij}) \defeq \frac{1}{2}\of{1 + \exp\ofb{-2\theta(1 - m_{ij})} + 2 \sqrt{1 - \exp\ofb{-2 \theta m_{ij}}} \exp\ofb{-\theta (1 - m_{ij})}}\,.
\end{align}
We combine \cref{eq:epr_fm_algo_energy} with \cref{lem:fractional_monogamy} to bound the approximation ratio on any graph $G(V,E,w)$:
\begin{align*}
    \frac{\bra{\chi}H_G\ket{\chi}}{\lambda_{max}(H_G)} \ge \frac{\sum_{(i,j) \in E} w_{ij} \cdot T(\theta, m_{ij})
    }{\sum_{(i,j)\in E} w_{ij} \of{1+m_{ij}}}\,.
\end{align*}
Each term in the numerator and denominator is positive, so the approximation ratio is at least the approximation ratio of the worst edge
\begin{align}\label{eq:epr_analysis_theta}
   \min_{(i,j) \in E} \frac{T(\theta, m_{ij})}{1 + m_{ij}} \ge \min_{x \in [0,1]} \frac{T(\theta, x)}{1 + x}\,.
\end{align}
Recall from \cref{eq:epr_frac_specialtheta} that we take $\theta=\frac{1}{2}\ln\varphi$.\footnote{We empirically identified $\theta$ as the angle that maximizes the approximation ratio in this analysis.} Substituting this value into $T(\theta, x)$ in \cref{eq:epr_fm_algo_energy}, the RHS of \cref{eq:epr_analysis_theta} yields the minimization problem
\begin{align*}
    \min_{x \in [0,1]} \ofb{
    \frac{1}{2(1+x)} \left(
    1 + \varphi^{-(1-x)} + 2\sqrt{1 - \varphi^{-x}}
    \cdot \varphi^{-\frac{1}{2}(1-x)}
    \right)},
\end{align*}
whose value is found to be $\varphi/2$ by the following lemma:

\begin{lemma}\label{lemma:epr_numerical_inequality}
Given $\varphi \defeq \frac{1+\sqrt{5}}{2}$, we have that
\begin{align}\label{eq:epr_min_problem}
    \min_{x \in [0,1]} \ofb{
    \frac{1}{2(1+x)} \left(
    1 + \varphi^{-(1-x)} + 2\sqrt{1 - \varphi^{-x}}
    \cdot \varphi^{-\frac{1}{2}(1-x)}
    \right)
    } =\frac{1+\sqrt{5}}{4} = \frac{\varphi}{2}\,.
\end{align}  
\end{lemma}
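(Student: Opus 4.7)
The plan is to reduce the one-variable minimization to a cleaner inequality via a well-chosen substitution, and then to analyze the curvature of the reduced function on two subintervals. Setting $p \defeq \varphi^{-(1-x)/2}$ and using $\varphi^{-(1-x)}\varphi^{-x} = \varphi^{-1}$, the mixed term collapses nicely: $2\,\varphi^{-(1-x)/2}\sqrt{1 - \varphi^{-x}} = 2\sqrt{p^2 - \varphi^{-1}}$. Letting $r \defeq \sqrt{p^2 - \varphi^{-1}}$, which lies in $[0,\varphi^{-1}]$ since $1 - \varphi^{-1} = \varphi^{-2}$, the numerator of the expression inside the minimum becomes $(1+r)^2 + \varphi^{-1}$. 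Combining with $1 + x = 2 + \ln(r^2 + \varphi^{-1})/\ln\varphi$ and the identity $2\varphi - \varphi^{-1} = \varphi^2$, the desired inequality rearranges to
\begin{align*}
    g(r) \defeq (1+r)^2 - \varphi^2 - \frac{\varphi}{\ln\varphi}\,\ln(r^2 + \varphi^{-1}) \ge 0, \qquad r \in [0,\varphi^{-1}].
\end{align*}

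I would first verify directly that $g(0) = g(\varphi^{-1}) = 0$: at $r = 0$ this uses $\ln\varphi^{-1} = -\ln\varphi$ together with $\varphi^2 = \varphi + 1$, and at $r = \varphi^{-1}$ it uses $\varphi^{-1} + \varphi^{-2} = 1$. These boundary equalities correspond to the extremal cases $x = 0$ and $x = 1$ both attaining the ratio $\varphi/2$, and they already imply $\min_x f(x) \le \varphi/2$. The remaining task is to show $g(r) \ge 0$ on the interior. The key structural fact is that
\begin{align*}
    g''(r) = 2 - \frac{2\varphi}{\ln\varphi}\,\frac{\varphi^{-1} - r^2}{(r^2 + \varphi^{-1})^2}
\end{align*}
is strictly increasing on $[0,\varphi^{-1}]$: differentiating the rational piece yields a factor of $r^2 - 3\varphi^{-1}$, which is strictly negative on the interval. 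Since $g''(0) = 2 - 2\varphi^2/\ln\varphi < 0$ and $g''(\varphi^{-1}) = 2 - 2\varphi^{-2}/\ln\varphi > 0$, there is a unique inflection point $r^* \in (0, \varphi^{-1})$, so that $g$ is concave on $[0,r^*]$ and convex on $[r^*, \varphi^{-1}]$.

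The proof then splits at $r^*$. On $[r^*, \varphi^{-1}]$, convexity makes $g'$ nondecreasing, so $g'(r) \le g'(\varphi^{-1}) = 2\varphi - 2/\ln\varphi$. The elementary estimate $\varphi\ln\varphi < 1$ (clear from $\varphi < 1.62$ and $\ln\varphi < 0.49$) shows $g'(\varphi^{-1}) < 0$, so $g$ is strictly decreasing on $[r^*, \varphi^{-1}]$ and $g(r) \ge g(\varphi^{-1}) = 0$ there; in particular $g(r^*) \ge 0$. On $[0, r^*]$, concavity combined with $g(0) = 0$ gives $g(r) \ge (r/r^*)\,g(r^*) \ge 0$. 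The main obstacle is the mixed curvature of $g$: the function is neither globally concave nor globally convex on $[0, \varphi^{-1}]$, so one cannot simply read $g \ge 0$ off the endpoint values and a fixed-sign second derivative. Locating the unique inflection point by monotonicity of $g''$, and running separate arguments on the two subintervals, is what closes the proof; everything else is bookkeeping with golden-ratio identities.
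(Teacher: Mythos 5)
Your proof is correct, and it takes a genuinely different route from the paper's. The paper clears the radical through a chain of algebraic transformations and squarings, reducing the claim to $f(x) \le 0$ on $[0,1]$ for $f(x) = c^2(x+c)^2 - 4cx - 4 + c^{2x-2} - 2c^x(x+c)$, and then establishes that $f$ is \emph{globally convex} there (itself a two-stage argument: substituting $z = c^x$, showing $f''$ is decreasing in $z$ and nonnegative at $z = c$), concluding from $f(0) = f(1) = 0$ since a convex function attains its maximum at the boundary. You instead collapse the radical via $p = \varphi^{-(1-x)/2}$ and $r = \sqrt{p^2 - \varphi^{-1}}$; I verified your identities: $2\varphi^{-(1-x)/2}\sqrt{1-\varphi^{-x}} = 2\sqrt{p^2 - \varphi^{-1}}$, the numerator becomes $(1+r)^2 + \varphi^{-1}$, $1 + x = 2 + \ln(r^2 + \varphi^{-1})/\ln\varphi$, and $2\varphi - \varphi^{-1} = \varphi^2$, so the claim is exactly $g(r) \ge 0$ on $[0, \varphi^{-1}]$ with $g(0) = g(\varphi^{-1}) = 0$. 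The price of this cleaner transformed function is mixed curvature: unlike the paper's globally convex $f$, your $g$ is concave then convex, and you handle this correctly — the derivative of the rational piece of $g''$ is $2r(r^2 - 3\varphi^{-1})/(r^2+\varphi^{-1})^3 \le 0$ on the interval, so $g''$ is increasing; the endpoint signs $g''(0) = 2 - 2\varphi^2/\ln\varphi < 0$ and $g''(\varphi^{-1}) = 2 - 2\varphi^{-2}/\ln\varphi > 0$ give a unique inflection $r^*$; on $[r^*, \varphi^{-1}]$ convexity plus $g'(\varphi^{-1}) = 2\varphi - 2/\ln\varphi < 0$ (valid since $\varphi \ln\varphi < 1.62 \cdot 0.49 < 1$) forces $g$ decreasing to $g(\varphi^{-1}) = 0$, and on $[0, r^*]$ the chord bound from concavity with $g(0) = 0$, $g(r^*) \ge 0$ finishes. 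What your route buys is the complete absence of squaring steps (so no sign bookkeeping in an equivalence chain) and a single-inflection structure that is easy to audit; what the paper's route buys is one global convexity statement with no case split. One small point to make explicit in a final write-up: your assertion $g''(\varphi^{-1}) > 0$ amounts to $\ln\varphi > \varphi^{-2} = 2 - \varphi$, which holds because $e^{2-\varphi} < e^{0.4} < 1.5 < \varphi$; you state it without justification, though this is at the same level of numerical rigor as the paper's own estimates such as $1.6 < c < 2$ and $2\ln c < 1$.
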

The proof of \Cref{lemma:epr_numerical_inequality} is deferred to \cref{apx:proom_om_lemma_epr_numerical}.

\subsection{QMC}\label{sec:analysis/qmc}
We now analyze \cref{alg:algo_qmc} and prove the $0.611$-approximation in \cref{thm:qmc_improved}. We first introduce some helpful notation. 
Fix a graph $G(V,E,w)$.
Let $\tilde{\rho}$ be the pseudo-density matrix outputted by solving the SDP of the level-2 quantum moment-SOS hierarchy (described in \cref{sec:preliminaries/previous_algorithms}).
We define 
\begin{align*}
g_{ij} \defeq \tr[h_{ij}^{QMC} \tilde{\rho}]\,, \quad\quad s_{ij}\defeq \frac{1}{3}\tr\ofb{(X_iX_j+Y_iY_j+Z_iZ_j)\tilde{\rho}}\,.
\end{align*}
The first value is the energy obtained by $\tilde{\rho}$ on $h_{ij}^{QMC}$; the second value is the expected value of $\tilde{\rho}$ with respect to the traceless components of the term $h_{ij}^{QMC}$. It is straightforward to show that $g_{ij} = \frac{1}{2} (1 - 3 s_{ij})$. 
The output of the SDP gives an upper bound to the optimal energy:
\begin{align}
\lambda_{max}(H_G) \le \sum_{(i,j)\in E} w_{ij} \cdot g_{ij} = \sum_{(i,j)\in E} w_{ij} \cdot \frac{1-3s_{ij}}{2}\,.\label{eqn:sdpupperbound}
\end{align}
Fix a positive integer $k$ and suppose \cref{lem:strengthened_monogamy} is true for $d = \frac{2k}{2k+1}$. Then, consider a \emph{reweighting} $\widetilde{G}(V,E,u)$ of the graph $G$ where $u_{ij} \ge 0$. Let $x \in \mathbb{R}^E$ be a vector with 
\begin{align*}
    x_{ij}  \defeq d \of{g_{ij}-1}^+ = d \of{-\frac{1+3s_{ij}}{2}}^+.
\end{align*}
Here, we again use the notation $(\cdot)^+ \defeq \max(\,\cdot\,,0)$ from~\cite{lee2024}, which ensures 
\begin{align*}
\of{-\frac{1+3s_{ij}}{2}}^+  \in \ofb{0,1}\,,  \quad\quad \forall (i,j)\in E.
\end{align*}

It is shown (e.g. in \cite[Lemma 3.4]{gribling2025}) that $x$ is in the \emph{integral} matching polytope of $\widetilde{G}$. This means that $x$ is a convex combination of integral matchings $\ofc{m_\ell}_{1 \le \ell \le k}$. Since the weight of any integral matching $m_{\ell}$ on $\widetilde{G}$ is at most $\maxM_{\widetilde{G}}$, the weight of $x$ on $\widetilde{G}$ is also at most  $\maxM_{\widetilde{G}}$. Thus,
\begin{align}\label{eq:s_matching_relation}
    \sum_{(i,j)\in E} u_{ij}\cdot d \cdot \of{-\frac{1+3s_{ij}}{2}}^+ \leq \maxM_{\widetilde{G}}\,.
\end{align}
\cref{eq:s_matching_relation} in this context was first used in~\cite{parekh2021} and also appears in~\cite[Lemma 4]{lee2024} and \cite[Lemma C.2]{jorquera2024}.

We now find a lower bound for the energy obtained by the matching state $\rho_{PMATCH}\of{\theta}$ in $\pmatch$.  We consider the energy on a case-by-case basis in the following lemma:
\begin{lemma}\label{lem:qmc_energy_by_edge_all}
Consider the state $\rho_{PMATCH}$ and matching $\widetilde{M}$ from \Cref{alg:matching_algo_qmc}.
For $x \in \{0,1,2\}$, let $\widetilde{S}_x$ be the subset of edges not in $\widetilde{M}$ (e.g. $\widetilde{S}_x \subseteq E \setminus \widetilde{M}$) where each edge in $\widetilde{S}_x$ has $x$ matched endpoints in $\widetilde{M}$.
Then the energy of $\rho_{PMATCH}$ on an edge $(i,j)$ is exactly
\begin{align*}
    \tr\ofb{h_{ij}^{QMC} \rho_{PMATCH}} = \begin{cases}
		\frac{1}{2}\of{1+\sin\theta}\of{1-t_{ij}}, & (i,j) \in  \widetilde{M}\,,\\
            \frac{1}{2}( 1- t_{ij}), &(i,j) \in \widetilde{S}_0\,, \\
            \frac{1}{2}\of{1-t_{ij} \cos\theta}, & (i,j) \in \widetilde{S}_1\,, \\
            \frac{1}{2}\of{1-t_{ij} \cos^2\theta}, & (i,j) \in \widetilde{S}_2\,.
		 \end{cases}
\end{align*}
    where, $t_{ij}$ is defined in \cref{alg:matching_algo_qmc}.
\end{lemma}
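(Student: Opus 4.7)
The plan is to go through the four cases in turn, using the structure of $\rho_{PMATCH}$ as a tensor product whose factors are either single-qubit pure states $\rho_i$ (for unmatched vertices) or two-qubit states $\rho_{ij}$ from \cref{lem:qmc_energy_by_edge} (for edges of $\widetilde{M}$). The main observation is that for any edge $(i,j) \notin \widetilde{M}$, the reduced state of $\rho_{PMATCH}$ on qubits $i$ and $j$ factorizes as $\rho_i^{\text{red}} \otimes \rho_j^{\text{red}}$, because the vertices $i$ and $j$ lie in distinct tensor factors of $\rho_{PMATCH}$ (their matched partners, if they exist, are different and not each other). The Bloch vector of each $\rho_k^{\text{red}}$ is either $\bloch{\rho_k}$ (if $k$ is unmatched in $\widetilde{M}$) or $\cos\theta \cdot \bloch{\rho_k}$ (if $k$ is matched, by the marginal formulas in \cref{lem:qmc_energy_by_edge}).

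The next step is a standard Pauli-expansion computation: for any 1-qubit states $\sigma_a, \sigma_b$ with Bloch vectors $\vec{b}_a, \vec{b}_b$, one has
\begin{align*}
\tr\ofb{h^{QMC}_{ij}(\sigma_a \otimes \sigma_b)} = \frac{1}{2}\of{1 - \vec{b}_a \cdot \vec{b}_b}\,,
\end{align*}
since $\tr[P_i P_j (\sigma_a \otimes \sigma_b)] = (\vec{b}_a)_P (\vec{b}_b)_P$ for each Pauli $P \in \{X,Y,Z\}$. Combining this with the Bloch-vector rescaling above immediately yields the three unmatched-edge cases: $(i,j) \in \widetilde{S}_0$ gives $\vec{b}_a \cdot \vec{b}_b = t_{ij}$; $(i,j) \in \widetilde{S}_1$ gives $\vec{b}_a \cdot \vec{b}_b = \cos\theta \cdot t_{ij}$; and $(i,j) \in \widetilde{S}_2$ gives $\vec{b}_a \cdot \vec{b}_b = \cos^2\theta \cdot t_{ij}$, recovering the three claimed expressions.

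For the matched case $(i,j) \in \widetilde{M}$, the reduced state on qubits $i,j$ is \emph{exactly} $\rho_{ij}$ from \cref{lem:qmc_energy_by_edge}, since $\widetilde{M}$ is a matching and so no other tensor factor touches $i$ or $j$. Applying the first formula of \cref{lem:qmc_energy_by_edge} directly gives $\tr\ofb{h^{QMC}_{ij}\rho_{PMATCH}} = \tfrac{1}{2}(1+\sin\theta)(1 - \bloch{\rho_i}\cdot\bloch{\rho_j}) = \tfrac{1}{2}(1+\sin\theta)(1-t_{ij})$.

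There is no real obstacle: the argument is essentially bookkeeping, with the single nontrivial input being \cref{lem:qmc_energy_by_edge} (proved separately in the appendix). The only subtlety to flag carefully is that in the $\widetilde{S}_2$ case, the two endpoints $i$ and $j$ must be matched to \emph{different} partners (since $(i,j) \notin \widetilde{M}$ and each vertex has at most one matched neighbor), which is exactly what guarantees the tensor-product form of the reduced state and hence the clean product of Bloch-vector rescalings.
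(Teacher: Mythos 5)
Your proof is correct and follows essentially the same route as the paper's: the matched case is read off directly from \cref{lem:qmc_energy_by_edge}, and the unmatched cases use the factorization of the reduced state together with the product-state energy formula $\tr\ofb{h^{QMC}_{ij}(\sigma_a \otimes \sigma_b)} = \frac{1}{2}\of{1 - \vec{b}_a \cdot \vec{b}_b}$ and the $\cos\theta$ Bloch-vector rescaling, split by the number of matched endpoints. The only cosmetic difference is that you derive the product-state formula by a direct Pauli expansion whereas the paper cites \cite[Lemma 10]{king2023}, and your explicit justification of why the reduced state factorizes (distinct matched partners in the $\widetilde{S}_2$ case) is a point the paper states without elaboration.
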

\begin{proof}
The case when $(i,j) \in \widetilde{M}$ follows directly from how we choose $\rho_{PMATCH}$ in Algorithm \ref{alg:matching_algo_qmc}, Step 4. For all other edges, recall that $t_{ij}$ is the inner product of the Bloch vectors of qubits $i$ and $j$ with respect to the product state $\rho_{PROD}$. The reduced density matrix of $\rho_{\pmatch}$ on qubits $i$ and $j$ is a product state $\sigma_i \otimes \sigma_j$.  By \cite[Lemma 10]{king2023}, the energy of $\sigma_i \otimes \sigma_j$ on $h^{QMC}_{ij}$ is  
    \begin{align*}
        \frac{1}{2}\of{1-\bloch{\sigma_i} \cdot \bloch{\sigma_j}} \,.
    \end{align*}
    We handle each remaining case separately:
\begin{enumerate}[label=\roman*.]
         \item $(i,j)\in \widetilde{S}_0$: Here, the product state is exactly $\rho_i \otimes \rho_j$, and so $\bloch{\sigma_i} \cdot \bloch{\sigma_j} = t_{ij}$ by definition.\footnote{Although $\widetilde{M}$ is a maximum matching, its edge set $E'$ is a subset of $E$, and so $\widetilde{S}_0$ may be non-empty.}
  \item  $(i,j) \in \widetilde{S}_1$: Without loss of generality, $i$ belongs to a matched edge in $\widetilde{M}$ and $j$ is not in a matched edge.
\Cref{lem:qmc_energy_by_edge} implies that $\bloch{\sigma_i}$ is rescaled to $\bloch{\sigma_i} \cos \theta$, and so  $\bloch{\sigma_i} \cdot \bloch{\sigma_j} = t_{ij}\cos \theta$. 
  \item $(i,j)\in \widetilde{S}_2$:
     $i$ and $j$ belong to two different matched edges. Therefore, it has \emph{both} Bloch vectors rescaled, i.e. $\bloch{\sigma_i} \cdot \bloch{\sigma_j} = \left(\bloch{\rho_i} \cdot \bloch{\rho_j}\right) \cos^2 \theta = t_{ij} \cos ^2\theta$. 
     \qedhere

\end{enumerate}
\end{proof}

\cref{lem:qmc_energy_by_edge_all} allows us to compute the energy obtained by the state $\rho_{PMATCH}\of{\theta}$:
\begin{align}
    \pmatch(\theta) &\defeq  \sum_{(i,j) \in E} w_{ij} \cdot \tr[h_{ij}^{QMC} \rho_{PMATCH}] \nonumber
     \\
     &=     \sum_{(i,j)\in \widetilde{M}}\frac{w_{ij}}{2}\of{1+\sin\theta}\of{1-t_{ij}} 
     + \sum_{(i,j)\in \widetilde{S}_0} \frac{w_{ij}}{2}\of{1-t_{ij}} 
     \nonumber
     \\
      & \hspace{3\parindent} 
     + \sum_{(i,j)\in \widetilde{S}_1} \frac{w_{ij}}{2}\of{1-t_{ij} \cos\theta} 
     + \sum_{(i,j)\in \widetilde{S}_2}\frac{w_{ij}}{2}\of{1- t_{ij}\cos^2\theta} 
     \nonumber\\
     &= \sum_{(i,j)\in E}\frac{w_{ij}}{2}\of{1- t_{ij}\cos^2\theta} 
     +
     \sum_{(i,j)\in \widetilde{M}}\frac{w_{ij}}{2}\left( \of{1+\sin\theta}\of{1-t_{ij}} - (1 - t_{ij} \cos^2 \theta)\right) \nonumber
     \\
     & \hspace{3\parindent} 
     +
     \sum_{(i,j)\in \widetilde{S}_0} \frac{w_{ij} t_{ij}}{2}\of{\cos^2 \theta - 1} \nonumber
     +
     \sum_{(i,j)\in \widetilde{S}_1} \frac{w_{ij} t_{ij}}{2}\of{\cos^2 \theta - \cos \theta} \nonumber
     \\
     &= \sum_{(i,j)\in E}\frac{w_{ij}}{2}\of{1- t_{ij}\cos^2\theta} 
     +
     \sum_{(i,j)\in \widetilde{M}}\frac{w_{ij}}{2}\of{\sin{\theta}\of{1-t_{ij}\of{1+\sin{\theta}}}}  
     \nonumber \\
     & \hspace{3\parindent} 
               - \sum_{(i,j)\in \widetilde{S}_0} \frac{w_{ij} t_{ij}}{2}\sin^2 \theta
     +
     \sum_{(i,j)\in \widetilde{S}_1} \frac{w_{ij} t_{ij}}{2}\of{\cos^2 \theta - \cos \theta}\label{eq:algp_pre_matching_sub}\,. 
\end{align}
Note that $\widetilde{M}$ is a maximum-weight \emph{integral} matching with respect to the rescaled weights defined in~\cref{eq:qmc_algo_rescaled_weights}. Thus, we can invoke \cref{eq:s_matching_relation} to replace the sum over $\widetilde{M}$ in \cref{eq:algp_pre_matching_sub} with a sum over $E$. Since the rescaled weights are positive for all edges in $\widetilde{M}$, we have
\begin{align}\label{eq:lower_bound_algp_1}
    \pmatch(\theta) &\ge \sum_{(i,j) \in E} \frac{w_{ij}}{2}
    \Bigg(1-  t_{ij}\cos^2\theta + d \cdot \big(\sin{\theta}\of{1-t_{ij}\of{1+\sin{\theta}}}\big)   \of{-\frac{1+3s_{ij}}{2}}^+\Bigg) \nonumber\\
    &\hspace{2em}-\sum_{(i,j)\in\widetilde{S}_0} \frac{w_{ij} t_{ij}}{2}\sin^2 \theta
    +\sum_{(i,j)\in\widetilde{S}_1} \frac{w_{ij} t_{ij}}{2}\of{\cos^2\theta-\cos\theta}\,.
\end{align}

Note that $t_{ij}$ is a random variable because $\prd$ is a randomized algorithm. The analysis of $\prd$ in~\cite{gharibian2019} is based on~\cite{briet2010}, which shows that $\mathbb{E}\ofb{t_{ij}} = F(s_{ij})$. A definition of $F$ in the context of QMC appears in~\cite[Equation 3]{parekh2022}. For our purposes, it is enough to note that $F$ is an odd function because it has the form
\begin{equation}\label{eq:F}
F(s) = c\cdot s \cdot G(s^2)\,,
\end{equation}
where $c > 0$ is a constant and $G \geq 0$ is a hypergeometric function. Since \cref{eq:lower_bound_algp_1} is linear in $t_{ij}$, we use linearity of expectation to conclude

\begin{align}\label{eq:e1_qmc}
    \mathbb{E}\ofb{\pmatch(\theta)} &\ge \sum_{(i,j) \in E} \frac{w_{ij}}{2}
    \Bigg(1- F(s_{ij})\cos^2\theta  + d \sin{\theta}\left(1-F(s_{ij})\of{1+\sin{\theta}}\right)
    \of{-\frac{1+3s_{ij}}{2}}^+\Bigg) \nonumber\\
    &\hspace{2em}
    -\sum_{(i,j)\in\widetilde{S}_0} \frac{w_{ij}}{2}F(s_{ij}) \sin^2 \theta
    +\sum_{(i,j)\in\widetilde{S}_1} \frac{w_{ij}}{2}\Big(\!\of{\cos^2\theta-\cos\theta}\,F(s_{ij})\Big)\,.
\end{align}
The following lemma lets us analyze the performance of \Cref{alg:algo_qmc}, which takes the better of two algorithms, $\match$ and $\pmatch$. 

\begin{lemma}[Reducing worst-case bounds to a single edge]\label{lem:minimax_reduction}
    Suppose we have a collection of $k$ approximation algorithms $\{A_{\ell}\}_{\ell \in \ofb{k}}$ and an upper bound on the maximum energy $\lambda_{max}(H_G)\le \sum_{e \in E} w_e b_e$, where $b_e \in B = (0, b_{max}]$ for all $e \in E$, and $b_{max}$ is a constant. Furthermore, suppose that the energy that algorithm $A_{\ell}$ earns on edge $e$ is a function of $b_e$, which we denote $a_{\ell}(b_e)$, such that $A_{\ell}$ earns $\sum_{e \in E} w_e\, a_{\ell}(b_e)$. Then, by running each of $A_{\ell\in[k]}$ and taking the output with the maximum energy, we can obtain an approximation ratio
    \begin{align*}
        \alpha \ge \max_{\mu_{\ell}} \;\min_b \sum_{\ell} \mu_{\ell} \frac{a_{\ell}(b)}{b},
    \end{align*}
    where the maximum is taken over valid probability distributions 
    \begin{align*}
        \{\ \mu\ |\  \sum_{\ell \in[k]}\mu_{\ell}=1, \; 0\le \mu_{\ell}\le 1 \; \forall \ell\in[k] \   \}\,,
    \end{align*}
    and the minimum is taken over $b\in B$.
\end{lemma}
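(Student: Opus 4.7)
The plan is to use the standard observation that taking the maximum of $k$ algorithm outputs dominates any convex combination of them. I would fix an arbitrary probability distribution $\mu = (\mu_1, \ldots, \mu_k)$ over the algorithms and note that returning the best output achieves energy at least the $\mu$-weighted average, which by linearity over edges equals $\sum_{\ell} \mu_\ell \sum_{e\in E} w_e\, a_\ell(b_e) = \sum_{e \in E} w_e \sum_\ell \mu_\ell\, a_\ell(b_e)$.

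Next, I would divide by the assumed upper bound $\lambda_{max}(H_G) \le \sum_e w_e b_e$ and rewrite the resulting lower bound on the approximation ratio as
\begin{equation*}
\alpha \ge \frac{\sum_{e} w_e \sum_\ell \mu_\ell\, a_\ell(b_e)}{\sum_{e} w_e b_e} = \sum_{e \in E} \frac{w_e b_e}{\sum_{e' \in E} w_{e'} b_{e'}} \cdot \frac{\sum_\ell \mu_\ell\, a_\ell(b_e)}{b_e}.
\end{equation*}
Since $w_e > 0$ and $b_e \in (0, b_{\max}]$, the coefficients $w_e b_e / \sum_{e'} w_{e'} b_{e'}$ form a valid probability distribution over $E$, so the right-hand side is a convex combination of the per-edge quantities $\sum_\ell \mu_\ell\, a_\ell(b_e)/b_e$. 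Any convex combination dominates its minimum summand, so the ratio is at least $\min_{e \in E} \sum_\ell \mu_\ell\, a_\ell(b_e)/b_e$, which is in turn at least $\min_{b \in B} \sum_\ell \mu_\ell\, a_\ell(b)/b$ because each $b_e$ lies in $B$.

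Since $\mu$ was arbitrary, I would conclude by taking the supremum over valid probability distributions, producing the stated $\max_\mu \min_b$ bound. There is essentially no obstacle here; the only small piece of bookkeeping is verifying that strict positivity of $b_e$ (guaranteed by $B = (0, b_{\max}]$) makes the per-edge division well-defined and the weights in the convex combination strictly positive. Everything else reduces to three routine moves: max-dominates-average, swapping the order of summation, and bounding a convex combination below by its minimum term.
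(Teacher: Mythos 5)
Your proposal is correct and follows essentially the same argument as the paper's proof: maximum dominates the $\mu$-weighted average, rewrite the ratio as a convex combination over edges with weights $w_e b_e / \sum_{e'} w_{e'} b_{e'}$, bound it below by the worst edge, relax to $\min_{b \in B}$, and finally take the maximum over distributions $\mu$. No gaps; your added remark about strict positivity of $b_e$ making the per-edge division well-defined is sound bookkeeping that the paper leaves implicit.
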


\begin{proof}
Given a specific graph $G$, outputting the best of $A_{\ell}$ is at least as good as outputting the result of $A_{\ell}$ with probability $\mu_{\ell}$, for all $\ell \in[k]$ and for any possible $\mu$. Now, given a fixed distribution $\mu,$ the approximation ratio is at least
    \begin{align*}
        \alpha &\ge
         \;\sum_{\ell}\mu_{\ell} \frac{\sum_{e \in E} w_e\, a_{\ell}(b_e)}{\sum_{e \in E} w_e b_e} \\ 
         &=
         \;\sum_{\ell}\mu_{\ell} \sum_{e \in E} \left( \frac{w_e b_e}{\sum_{e \in E} w_e b_e} \right) \frac{a_{\ell}(b_e)}{b_e} \\ 
         &\ge
         \;\min_{e\in E} \,\sum_{\ell}\mu_{\ell} \frac{ a_{\ell}(b_e)}{ b_e} \\ 
         &\ge
         \;\min_{b \in B} \sum_{\ell}\mu_{\ell} \frac{ a_{\ell}(b)}{ b}\,.
    \end{align*}
The third line follows because 
\begin{align*}
    \ofc{\frac{w_e b_e}{\sum_{e \in E} w_e b_e}}_{e \in E}
\end{align*} 
is a distribution over $E$. Since the statement is true for all possible distributions $\mu$, we take the maximum over $\mu$ to finish the lemma.
\end{proof}

We apply \cref{lem:minimax_reduction} with the SDP upper bound in \cref{eqn:sdpupperbound}
and our two algorithms $\match$ and $\pmatch$ as described in \cref{alg:algo_qmc}.  For $\pmatch$, we consider the minimum of three different functions, depending on if an edge is in $\widetilde{S}_0$, in $\widetilde{S}_1$, or otherwise. For $\match$, we derived in \cref{sec:preliminaries/previous_algorithms} that the energy obtained by $\match$ on $G$ is $\frac{3\maxM+W}{2}$. Since $\maxM$ is a maximum-weight matching with respect to $G$, we may invoke \cref{eq:s_matching_relation} to express
\begin{align*}
    \maxM \geq \sum_{(i,j)\in E} w_{ij} \,\cdot  d \cdot \of{-\frac{1+3s_{ij}}{2}}^+\,,
\end{align*}
implying that $\match$ achieves energy at least
\begin{align*}
    \frac{3\maxM+W}{2} \geq \sum_{(i,j)\in E} \frac{w_{ij}}{2} \of{1+3 d \of{-\frac{1+3s_{ij}}{2}}^+}\,.
\end{align*}
We have lower-bounded the energy that each algorithm earns on an edge $(i,j)$ as a function of $s_{ij}$. Note that the SDP upper bound is trivial (i.e. at most $0$) when $s_{ij} \ge \frac{1}{3}$, so we may search over $s \in [-1, 1/3)$. Altogether, \Cref{lem:minimax_reduction} implies that \Cref{alg:algo_qmc} obtains approximation ratio at least

\begin{align}
    \alpha &\ge \max_{\mu \in \ofb{0,1},\, \theta \in \ofb{0,\pi/2}} 
 \min
    \Big\{\min_{s \in [-1,1/3)} \;\ \mu\, E_{PM}(\theta) + (1-\mu) E_M, \nonumber
    \\
    &\hspace{6.4\parindent}\min_{s \in [-1,1/3)} \;\;\mu\, E_{PM}'(\theta) + (1-\mu) E_M,
    \nonumber
    \\
    &\hspace{6.4\parindent}\min_{s \in [-1,1/3)} \;\;\mu\, E_{PM}''(\theta) + (1-\mu) E_M
    \Big\}\,, \label{eq:qmc_minimax_apx_rato}\\
    E_{PM}(\theta) &\defeq \frac{1-F(s) \cos^2\theta  
    + d \sin{\theta}\of{1-F(s)\of{1+\sin{\theta}}} \cdot  \big(-\frac{1+3s}{2}\big)^+}{1-3s}\,, \label{eq:e1_qmc_ratio}\\
    E_{PM}'(\theta) &\defeq \frac{1-F(s) \cos\theta  
    + d \sin{\theta}\of{1-F(s)\of{1+\sin{\theta}}} \cdot  \big(-\frac{1+3s}{2}\big)^+}{1-3s}\,, \label{eq:e1_qmc_ratio_S_0}\\
    E_{PM}''(\theta) &\defeq \frac{1-  F(s)  
    + d \sin{\theta}\of{1-F(s)\of{1+\sin{\theta}}} \cdot  \big(-\frac{1+3s}{2}\big)^+}{1-3s}\,, \label{eq:e1_qmc_ratio_S_1}\\
    E_M &\defeq \frac{1 + 3 \, d \,(-\frac{1+3s}{2})^+}{1-3s}\,. \label{eq:e2_qmc_ratio}
\end{align}
In the above, $\mu$ is demoted to a scalar (which fully describes a probability distribution over two variables). \cref{eq:e2_qmc_ratio} is the approximation ratio obtained by $\match$ on an edge with $s = s_{ij}$. The expected approximation ratio obtained by $\pmatch$ 
is \cref{eq:e1_qmc_ratio_S_0} on an edge in $\widetilde{S}_0$, \cref{eq:e1_qmc_ratio_S_1} on an edge in $\widetilde{S}_1$, and is \cref{eq:e1_qmc_ratio} on all other edges. 

We further simplify  \cref{eq:qmc_minimax_apx_rato}.  
Recall that $F$ is an odd function (\cref{eq:F}). So, for all $\theta \in [0,\pi/2]$ we have $E_{PM}''(\theta) \le E_{PM}'(\theta) \le E_{PM}(\theta)$ when $s > 0$ and $E_{PM}(\theta) \le E_{PM}'(\theta) \le E_{PM}''(\theta)$ when $s \le 0$. So we may rewrite  \cref{eq:qmc_minimax_apx_rato} as 
\begin{align}
     \alpha &\ge \max_{\mu \in \ofb{0,1},\, \theta \in \ofb{0,\pi/2}} 
 \min
    \Big\{\min_{s \in [-1,0)} \mu\, E_{PM}(\theta) + (1-\mu) E_M, 
    \min_{s \in [0,1/3)} \mu\, E_{PM}''(\theta) + (1-\mu) E_M
    \Big\} \label{eq:qmc_minimax_apx_rato_simplified}\,.
\end{align}
One may then search over the three free parameters $\mu$, $\theta$, and $s$ using an enumeration of the feasible ranges. For a more efficient approach, it is also possible to solve a linear program where the number of variables and constraints is based on an enumeration of the feasible ranges for $s$ and $\theta$. We obtain $\alpha > 0.611$ for $d=14/15$ at $\theta=1.286$ and $\mu = 0.861$. If the factor $d$ in \cref{lem:strengthened_monogamy} is improved to $1$, we would obtain $\alpha > 0.614$ at $\theta=1.288$ and $\mu = 0.881$.  

\begin{remark}\label{rem:s_t_negative}
We may simplify \cref{eq:qmc_minimax_apx_rato} by assuming $s \in [-1,0)$. This is because if $s \in [0, 1/3)$ then $E_M \geq 1$ and $E_{PM}''(\theta) \ge 1$ for all $\theta \in [0,\pi/2]$. This follows from $F$ being odd, and in addition $|F(s)| \leq |s|$ for all $s \in [-1,1]$.
\end{remark}

\section*{Acknowledgements}
This work is supported by a collaboration between the US DOE and other Agencies.
The work of A.A is supported by the Data Science Institute at the University of Chicago. 
E.L is supported by a KIAS Individual Grant CG093801 at Korea Institute for Advanced Study. 
K.M acknowledges support from AFOSR
(FA9550-21-1-0008). K.M and J.S. acknowledge that this material is based upon work supported by the National Science Foundation Graduate Research Fellowship under Grant No.\ 2140001. 
O.P. acknowledges that this material is based upon work supported by the U.S. Department of Energy, Office of Science, Accelerated Research in Quantum Computing, Fundamental Algorithmic Research toward Quantum Utility (FAR-Qu). 
J.S acknowledges that this work is funded in part by the STAQ project under award NSF Phy-232580; in part by the US Department of Energy Office of Advanced Scientific Computing Research, Accelerated Research for Quantum Computing Program.

This article has been authored by an employee of National Technology \& Engineering Solutions of Sandia, LLC under Contract No.\ DE-NA0003525 with the U.S. Department of Energy (DOE). The employee owns all right, title and interest in and to the article and is solely responsible for its contents. The United States Government retains and the publisher, by accepting the article for publication, acknowledges that the United States Government retains a non-exclusive, paid-up, irrevocable, world-wide license to publish or reproduce the published form of this article or allow others to do so, for United States Government purposes. The DOE will provide public access to these results of federally sponsored research in accordance with the DOE Public Access Plan \url{https://www.energy.gov/downloads/doe-public-access-plan}.

\printbibliography

@article{anshu2020,
  title        = {Beyond Product State Approximations for a Quantum Analogue of {{Max Cut}}},
  author       = {Anshu, Anurag and Gosset, David and Morenz, Karen},
  year         = {2020},
  journal      = {LIPIcs, Volume 158, TQC 2020},
  volume       = {158},
  pages        = {7:1--7:15},
  doi          = {10.4230/LIPIcs.TQC.2020.7},
  issn         = {1868-8969},
  eprint       = {2003.14394},
  primaryclass = {quant-ph},
  archiveprefix = {arXiv}
}

@misc{brandao2014,
  title        = {Product-State {{Approximations}} to {{Quantum Ground States}}},
  author       = {Brand{\~a}o, Fernando G. S. L. and Harrow, Aram W.},
  year         = {2014},
  month        = dec,
  publisher    = {arXiv},
  number       = {arXiv:1310.0017},
  eprint       = {1310.0017},
  archiveprefix = {arXiv}
}

@incollection{briet2010,
  title        = {The Positive Semidefinite {{Grothendieck}} Problem with Rank Constraint},
  author       = {Briet, Jop and Filho, Fernando Mario de Oliveira and Vallentin, Frank},
  year         = {2010},
  volume       = {6198},
  pages        = {31--42},
  doi          = {10.1007/978-3-642-14165-2_4},
  eprint       = {0910.5765},
  primaryclass = {math},
  archiveprefix = {arXiv}
}

@misc{cubitt2016,
  title        = {Complexity Classification of Local {{Hamiltonian}} Problems},
  author       = {Cubitt, Toby and Montanaro, Ashley},
  year         = {2016},
  month        = mar,
  publisher    = {arXiv},
  number       = {arXiv:1311.3161},
  eprint       = {1311.3161},
  primaryclass = {quant-ph},
  archiveprefix = {arXiv}
}

@article{edmonds1965,
  title        = {Paths, {{Trees}}, and {{Flowers}}},
  author       = {Edmonds, Jack},
  year         = {1965},
  month        = jan,
  journal      = {Canadian Journal of Mathematics},
  volume       = {17},
  pages        = {449--467},
  doi          = {10.4153/CJM-1965-045-4},
  issn         = {0008-414X, 1496-4279},
  langid       = {english}
}

@article{gamel2016,
  title        = {Entangled {{Bloch}} Spheres: {{Bloch}} Matrix and Two-Qubit State Space},
  shorttitle   = {Entangled {{Bloch}} Spheres},
  author       = {Gamel, Omar},
  year         = {2016},
  month        = jun,
  journal      = {Physical Review A},
  publisher    = {American Physical Society},
  volume       = {93},
  number       = {6},
  pages        = {062320},
  doi          = {10.1103/PhysRevA.93.062320}
}

@article{gharibian2019,
  title        = {Almost Optimal Classical Approximation Algorithms for a Quantum Generalization of {{Max-Cut}}},
  author       = {Gharibian, Sevag and Parekh, Ojas},
  year         = {2019},
  journal      = {LIPIcs, Volume 145, APPROX/RANDOM 2019},
  volume       = {145},
  pages        = {31:1--31:17},
  doi          = {10.4230/LIPIcs.APPROX-RANDOM.2019.31},
  issn         = {1868-8969},
  eprint       = {1909.08846},
  primaryclass = {quant-ph},
  archiveprefix = {arXiv}
}

@misc{gribling2025,
  title        = {Improved Approximation Ratios for the {{Quantum Max-Cut}} Problem on General, Triangle-Free and Bipartite Graphs},
  author       = {Gribling, Sander and Sinjorgo, Lennart and Sotirov, Renata},
  year         = {2025},
  month        = apr,
  publisher    = {arXiv},
  number       = {arXiv:2504.11120},
  doi          = {10.48550/arXiv.2504.11120},
  eprint       = {2504.11120},
  primaryclass = {quant-ph},
  archiveprefix = {arXiv}
}

@misc{huber2024,
  title        = {Second Order Cone Relaxations for Quantum {{Max Cut}}},
  author       = {Huber, Felix and Thompson, Kevin and Parekh, Ojas and Gharibian, Sevag},
  year         = {2024},
  month        = nov,
  publisher    = {arXiv},
  number       = {arXiv:2411.04120},
  doi          = {10.48550/arXiv.2411.04120},
  eprint       = {2411.04120},
  primaryclass = {quant-ph},
  archiveprefix = {arXiv}
}

@misc{jorquera2024,
  title        = {Monogamy of {{Entanglement Bounds}} and {{Improved Approximation Algorithms}} for {{Qudit Hamiltonians}}},
  author       = {Jorquera, Zackary and Kolla, Alexandra and Kordonowy, Steven and Sandhu, Juspreet Singh and Wayland, Stuart},
  year         = {2024},
  month        = nov,
  publisher    = {arXiv},
  number       = {arXiv:2410.15544},
  doi          = {10.48550/arXiv.2410.15544},
  eprint       = {2410.15544},
  primaryclass = {quant-ph},
  archiveprefix = {arXiv}
}

@misc{ju2025,
  title        = {Improved Approximation Algorithms for the {{EPR Hamiltonian}}},
  author       = {Ju, Nathan and Nagda, Ansh},
  year         = {2025},
  month        = apr,
  publisher    = {arXiv},
  number       = {arXiv:2504.10712},
  doi          = {10.48550/arXiv.2504.10712},
  eprint       = {2504.10712},
  primaryclass = {quant-ph},
  archiveprefix = {arXiv}
}

@article{king2023,
  title        = {An {{Improved Approximation Algorithm}} for {{Quantum Max-Cut}}},
  author       = {King, Robbie},
  year         = {2023},
  month        = nov,
  journal      = {Quantum},
  volume       = {7},
  pages        = {1180},
  doi          = {10.22331/q-2023-11-09-1180},
  issn         = {2521-327X},
  eprint       = {2209.02589},
  primaryclass = {quant-ph},
  archiveprefix = {arXiv}
}

@misc{lee2022,
  title        = {Optimizing Quantum Circuit Parameters via {{SDP}}},
  author       = {Lee, Eunou},
  year         = {2022},
  month        = sep,
  publisher    = {arXiv},
  number       = {arXiv:2209.00789},
  doi          = {10.48550/arXiv.2209.00789},
  eprint       = {2209.00789},
  primaryclass = {quant-ph},
  archiveprefix = {arXiv}
}

@misc{lee2024,
  title        = {An Improved {{Quantum Max Cut}} Approximation via Matching},
  author       = {Lee, Eunou and Parekh, Ojas},
  year         = {2024},
  month        = feb,
  publisher    = {arXiv},
  number       = {arXiv:2401.03616},
  eprint       = {2401.03616},
  primaryclass = {quant-ph},
  archiveprefix = {arXiv},
  langid       = {english}
}

@article{navascues2008,
  title        = {A Convergent Hierarchy of Semidefinite Programs Characterizing the Set of Quantum Correlations},
  author       = {Navascues, Miguel and Pironio, Stefano and Acin, Antonio},
  year         = {2008},
  month        = jul,
  journal      = {New Journal of Physics},
  volume       = {10},
  number       = {7},
  pages        = {073013},
  doi          = {10.1088/1367-2630/10/7/073013},
  issn         = {1367-2630},
  eprint       = {0803.4290},
  primaryclass = {quant-ph},
  archiveprefix = {arXiv}
}

@article{parekh2021,
  title        = {Application of the {{Level-}}\$2\$ {{Quantum Lasserre Hierarchy}} in {{Quantum Approximation Algorithms}}},
  author       = {Parekh, Ojas and Thompson, Kevin},
  year         = {2021},
  journal      = {LIPIcs, Volume 198, ICALP 2021},
  volume       = {198},
  pages        = {102:1--102:20},
  doi          = {10.4230/LIPIcs.ICALP.2021.102},
  issn         = {1868-8969},
  eprint       = {2105.05698},
  primaryclass = {quant-ph},
  archiveprefix = {arXiv}
}

@misc{parekh2022,
  title        = {An {{Optimal Product-State Approximation}} for 2-{{Local Quantum Hamiltonians}} with {{Positive Terms}}},
  author       = {Parekh, Ojas and Thompson, Kevin},
  year         = {2022},
  month        = jun,
  publisher    = {arXiv},
  number       = {arXiv:2206.08342},
  doi          = {10.48550/arXiv.2206.08342},
  eprint       = {2206.08342},
  primaryclass = {quant-ph},
  archiveprefix = {arXiv}
}

@misc{piddock2015,
  title        = {The Complexity of Antiferromagnetic Interactions and {{2D}} Lattices},
  author       = {Piddock, Stephen and Montanaro, Ashley},
  year         = {2015},
  month        = dec,
  publisher    = {arXiv},
  number       = {arXiv:1506.04014},
  eprint       = {1506.04014},
  archiveprefix = {arXiv}
}

@misc{takahashi2023,
  title        = {An {{SU}}(2)-Symmetric {{Semidefinite Programming Hierarchy}} for {{Quantum Max Cut}}},
  author       = {Takahashi, Jun and Rayudu, Chaithanya and Zhou, Cunlu and King, Robbie and Thompson, Kevin and Parekh, Ojas},
  year         = {2023},
  month        = aug,
  publisher    = {arXiv},
  number       = {arXiv:2307.15688},
  doi          = {10.48550/arXiv.2307.15688},
  eprint       = {2307.15688},
  primaryclass = {quant-ph},
  archiveprefix = {arXiv}
}

@article{watts2024,
  title        = {Relaxations and {{Exact Solutions}} to {{Quantum Max Cut}} via the {{Algebraic Structure}} of {{Swap Operators}}},
  author       = {Watts, Adam Bene and Chowdhury, Anirban and Epperly, Aidan and Helton, J. William and Klep, Igor},
  year         = {2024},
  month        = may,
  journal      = {Quantum},
  volume       = {8},
  pages        = {1352},
  doi          = {10.22331/q-2024-05-22-1352},
  issn         = {2521-327X},
  eprint       = {2307.15661},
  primaryclass = {math-ph, physics:quant-ph},
  archiveprefix = {arXiv}
}
\appendix

\clearpage
\newpage

\section{\texorpdfstring{Proof of \cref{lem:qmc_energy_by_edge}}{Proof of QMC energy by edge}}\label{apx:proom_om_lemma_qmc_energy_by_edge}

We aim to find a 2-qubit quantum state $\rho_{ij}$, which we will refer to as $\rho$, whose rescaled single-qubit marginals align with the unit vectors $\bloch{i}$ and $\bloch{j}$. To this end, we use the \emph{Bloch matrix representation} of 2-qubit states from \cite{gamel2016}. Namely, we write 
\begin{align*}
    \rho = \frac{1}{4} \sum_{\mu,\nu \in \ofb{4}}r_{\mu\nu}D_{\mu\nu}\,, 
\end{align*}
where the scalars $r_{\mu\nu}$ constitute the $16$ entries of a \emph{Bloch matrix} $r$ and $D_{\mu\nu} \defeq \vec{\sigma}^{\prime} \otimes \vec{\sigma}^{\prime}$ and $\vec{\sigma}^{\prime} \defeq \of{I,X,Y,Z}$. The conditions that $\rho_{ij}$ is Hermitian and unit trace imposes that $r_{ij} \in \mathbb{R}$ and $r_{00}=1$, respectively. The positive semidefinite constraint is nontrivial and is fully characterized by \cite{gamel2016}. We denote the set of valid Bloch matrices by $\mathcal{B}$. It is useful to write out $r$ explicitly for $\rho$
\begin{align*}
    r =
\begin{bmatrix}
1      & r_{01} & r_{02} & r_{03} \\
r_{10} & r_{11} & r_{12} & r_{13} \\
r_{20} & r_{21} & r_{22} & r_{23} \\
r_{30} & r_{31} & r_{32} & r_{33}
\end{bmatrix}
\defeq
\begin{bmatrix}
1 & \widetilde{v}_j^T \\
\widetilde{v}_i & R
\end{bmatrix}\,,
\end{align*}
where in particular, $v_i$ and $v_j$ are the specified single-qubit marginal Bloch vectors of qubits $i$ and $j$, respectively, and $R$ is the $\emph{correlation matrix}$ encoding the expectation values of $2$-qubit Pauli matrices $\braket{XX}, \braket{XY}, \allowbreak \ldots, \braket{ZZ}$. 

Under this notation, the energy of $\rho$ with respect to a QMC term is
\begin{align*}
    \tr\ofb{h^{QMC} \rho} = \frac{1-\tr\ofb{R}}{2}.
\end{align*}
Maximizing this energy given $v_i$, $v_j$, and $\theta$ corresponds to the following optimization problem:
\begin{align}\label{eq:two_qubit_state_optimization_pre_svd}
    \min_{r \in \mathcal{B}} \quad\quad &\tr\ofb{R} \\
    \text{subject to} \quad &\widetilde{v}_i = \cos{\theta}\, v_i \\
     &\widetilde{v}_j = \cos{\theta}\, v_j 
\end{align}
This can be solved exactly using the second level of the quantum moment-SOS hierarchy, which is exact for two qubit systems. The constraints impose linear constraints on 1-local moments, and the objective is a sum of 2-local moments.

For our purposes it will suffice to produce a feasible solution meeting the conditions stated in \cref{lem:qmc_energy_by_edge}. While an optimal solution to the above problem may be a mixed state, we will focus our attention on pure states for simplicity. An extensive numerical assessment suggests that the worst-case approximation ratio presented in this work cannot be improved by considering states beyond the one we construct. We use a characterization of pure two qubit states from \cite{gamel2016}, based on a singular value decomposition of $R$. The Bloch matrix of any such state can be expressed as: 
\begin{align*}
    R = U \Sigma V^T\,, \quad \widetilde{v}_i = Ug\,, \quad \widetilde{v}_j = Vh\,,
\end{align*}
where $U$ and $V$ are real-valued orthonormal $3 \times 3$ rotation matrices, and
\begin{align}\label{eq:pure_state_characterization}
    g = h = \of{\cos{\theta}, 0, 0}^T\,, \\
    \Sigma = \mathrm{diag}\of{1, \sin\theta, \sin\theta}\,, \\
    \label{eq:negative-det}
    \mathrm{det}\of{U}\mathrm{det}\of{V} = -1\,,
\end{align}
where $\theta$ above will coincide with our parameter $\theta$. We begin constructing $\rho$ with the desired properties by first considering the state $\rho'$ obtained by taking $U' = I$, and
\begin{align*}
    V' =
\begin{bmatrix}
t & -\sqrt{1-t^2} & 0 \\
\sqrt{1-t^2} & t & 0 \\
0 & 0 & -1
\end{bmatrix}\,,
\end{align*}
where $t = v_i \cdot v_j$. We have $\mathrm{det}\of{U'} = 1$ and  $\mathrm{det}\of{V'} = -1$, satisfying \cref{eq:negative-det}. We also have $\widetilde{v}'_i \cdot \widetilde{v}'_j = \cos^2\!{\theta}\,t$ and that both $\widetilde{v}'_i$ and $\widetilde{v}'_j$ have magnitude $\cos\theta$. We can thus pick an orthogonal matrix $W$ such that $\cos{\theta}\, v_i = W\widetilde{v}_i$ and $\cos{\theta}\, v_j = W\widetilde{v}_j$.

We get $\rho$ by taking $U = W$ and $V = WV'$. In this case $R = W\Sigma(V')^TW^T$ and $\tr\ofb{R} = \tr\ofb{\Sigma (V')^T} = t - \sin\theta(1-t)$. This gives the desired energy of
\begin{align*}
\frac{1-\tr\ofb{R}}{2} = \frac{(1+\sin\theta)(1-t)}{2}\,.
\end{align*}

\clearpage
\newpage
\section{\texorpdfstring{Proof of \Cref{lemma:epr_numerical_inequality}}{Proof of EPR approximation ratio}}
\label{apx:proom_om_lemma_epr_numerical}
In the following proof, we use $c$ to denote the golden ratio $\frac{1+\sqrt{5}}{2}$ instead of  $\varphi$ as in the main text. This change is solely for ease of readability. 

\begin{proof}
In terms of $c$, the statement we would like to prove is
\begin{align*}
   \min_{x \in [0,1]} \ofb{
    \frac{1}{2(1+x)} \left(
    1 + c^{-(1-x)} + 2\sqrt{1 - c^{-x}}
    \cdot c^{-\frac{1}{2}(1-x)}
    \right)
    } =\frac{c}{2}.
\end{align*}
The golden ratio $c$ obeys the identities $c^2 = c + 1$ and $c^3 = 2c + 1$. Note that the minimum value of \cref{eq:epr_min_problem} is at most $\frac{c}{2}$. This is because when $x = 0$, the left-hand side is equal to 
\begin{align*}
    \frac{1}{2} \left(1 + c^{-1} + 0\right) = \frac{1}{2c}(c + 1) = \frac{c^2}{2c} = \frac{c}{2}\,.
\end{align*}
Thus it suffices to show that for all $x \in [0,1]$,
\begin{align*}
1 + c^{-(1-x)} + 2\sqrt{1 - c^{-x}}
\cdot c^{-\frac{1}{2}(1-x)}
\stackrel{?}\ge c(1+x)\,.
\end{align*}
We repeatedly transform this inequality to find a simpler, equivalent inequality:
\begin{align*}
    &\Leftrightarrow  & 
c + c^{x}    + 2\sqrt{1 - c^{-x}}
\cdot c^{\frac{1}{2}(x+1)}
&\quad \stackrel{?}\ge\quad  c^2(1+x) \\
    &\Leftrightarrow  &
c + (c^{x}-1) + 2\sqrt{c} \cdot \sqrt{c^x - 1}
 &\quad \stackrel{?}\ge\quad  (c+1)(1+x)-1  \\
    &\Leftrightarrow  &
(\sqrt{c} + \sqrt{c^x - 1})^2
&\quad \stackrel{?}\ge\quad  cx + x + c  \\
    &\Leftrightarrow  &
\sqrt{c} + \sqrt{c^x - 1}
&\quad \stackrel{?}\ge\quad \sqrt{c^2x + c} \\
    &\Leftrightarrow  &
\sqrt{c^x - 1}
&\quad \stackrel{?}\ge\quad \sqrt{c} \cdot (\sqrt{cx + 1} - 1) \\
    &\Leftrightarrow  &
c^x - 1
&\quad \stackrel{?}\ge\quad c \cdot (cx + 1 + 1 - 2\sqrt{cx + 1}) \\
 &\Leftrightarrow  &
2c\sqrt{cx+1}
&\quad \stackrel{?}\ge\quad c^2 x + 2c + 1 -c^x \\
 &\Leftrightarrow  &
\sqrt{4cx+4}
&\quad \stackrel{?}\ge\quad c x + c^2 -c^{(x-1)} \\
 &\Leftrightarrow  &
4cx+4
&\quad \stackrel{?}\ge\quad
(cx+c^2)^2 +
c^{(2x-2)} 
-2 c^{(x-1)}(cx + c^2) \\
 &\Leftrightarrow  &
0
&\quad \stackrel{?}\ge\quad
c^2(x+c)^2
-4cx-4
+
c^{(2x-2)}
-2c^x(x+c)\,.
\end{align*}
Let $f(x) \defeq c^2(x+c)^2
-4cx-4
+
c^{(2x-2)}
-2c^x(x+c)$. We will complete the proof by showing that $f(x) \le 0$ for all $x \in [0,1]$. We do this by proving that $f$ is convex in the region $x \in [0,1]$, so its maximum value is attained at either $x=0$ or $x=1$. We then verify the inequality at each boundary point. Consider the second derivative of $f$ with respect to $x$:
\begin{align*}
    f''(x) = 2c^2 + (2 \ln c)^2 c^{(2x - 2)} - 2c^x\ln c \cdot \left((x+c)\ln c + 2\right)\,.
\end{align*}
We substitute in the variable $z \defeq c^x$ for $z \in [1,c]$. Then $g(z) \defeq f''(\log_c x)$ is equal to
\begin{align*}
    g(z) =  2c^2 
    +
    z^2 \cdot (4 c^{-2} \ln^2 c)
    - z \cdot 2 \ln c \cdot ((c + \log_c z)\ln c + 2)\,.
\end{align*}
To show $f$ is convex, we want to show $f''(x) \ge 0$ for all $x \in [0,1]$ (or equivalently, $g(z) \ge 0$ for all $z \in [1,c]$). We do so by showing that $g$ is decreasing on the interval $[1,c]$, and $g(c) \ge 0$. Consider the first derivative of $g$ with respect to $z$:
\begin{align*}
    g'(z) &= z \cdot (8c^{-2} \ln^2 c) - 4 \ln c - 2 c\ln^2 c - 2 \ln^2 c \cdot \left(\log_c z + \frac{1}{\ln c} \right)\,.
\end{align*}
Note that $g'$ is itself convex. Using the higher-order derivatives of $g$, 
\begin{align*}
 g''(z) &= 8c^{-2} \ln^2 c - \frac{1}{z \ln c} \cdot 2 \ln^2 c = 8c^{-2} \ln^2 c - \frac{2 \ln c}{z}\,,\\
 g'''(z) &= \frac{2 \ln c}{z^2}\,,
\end{align*}
observe that $(g')'' = g''' \ge 0$ for all $z$. So $g'$ achieves its maximum value at a boundary point (either $z = 1$ or $z = c$). When $z = 1$, we have
\begin{align*}
    g'(1) &=  8c^{-2} \ln^2 c - 4 \ln c - 2 c\ln^2 c - 2 \ln^2 c \cdot \left(\frac{1}{\ln c} \right)
    \\
    &= (\ln c) \cdot (8c^{-2} \ln c - 6 - 2c \ln c)
    \\
    &\le (\ln c) \cdot (-6 + \frac{8 \ln c}{c+1})   
    \\
    &\le (\ln c) \cdot (-6 + \frac{8 \ln e}{2})
    \\
    &< 0\,.
\end{align*}
When $z = c$, we have
\begin{align*}
    g'(c) &=  8c^{-1} \ln^2 c - 4 \ln c - 2 c\ln^2 c - 2 \ln^2 c \cdot \left(1 + \frac{1}{\ln c} \right)
    \\
    &= (\ln c) \cdot (8c^{-1} \ln c - 6 - 2 (c+1) \ln c)
    \\
    &\le (\ln c) \cdot (-6 + \frac{8 \ln c}{c})
    \\
    &\le (\ln c) \cdot (-6 + \frac{8 \ln e}{1.6})
    \\
    &< 0\,.
\end{align*}
Since both of these values are negative, $g'$ is negative on the interval $[1,c]$. So $g$ is decreasing on the interval $[1,c]$.
To finish the claim that $g \ge 0$ on the interval $[1,c]$, we check that $g(c) \ge 0$:
\begin{align*}
    g(c) &= 2c^2 + 4 \ln^2 c - 2c \ln c \cdot ((c+1)\ln c + 2)
    \\
    &= 2c^2 - 4c \ln c + 4 \ln^2 c - 2c^3 \ln^2 c 
    \\
    &= 2(c - \ln c)^2 + 2 \ln^2 c - 2(2c+1) \ln^2 c 
    \\
    &= 2(c - \ln c)^2 - 4c \ln^2 c \,.
\end{align*}
Note that $1.6 < c < 2$ and $2 \ln c = \ln c^2 = \ln (c + 1) < \ln 2.62 < 1$. Then we can lower-bound $g(c)$ as 
\begin{align*}
    g(c) \ge 2(1.6 - 0.5)^2 - 4 \cdot 2 \cdot 0.5^2 = 2.42 - 2 > 0\,.
\end{align*}
Finally, we verify that the boundary points of $f$ are at most $0$. When $x = 0$, we have
\begin{align*}
    f(0) &= c^4 - 4 + c^{-2} - 2c 
    \\
    &= c^{-2}(c^6 - 2c^3 - 4c^2 + 1)
    \\
    &= c^{-2}((c^3 - 1)^2 - 4c^2)
    \\
    &= c^{-2}((2c)^2 - 4c^2)
    \\
    &= 0\,.
\end{align*}
When $x = 1$, we have
\begin{align*}
        f(1) &= c^2(c+1)^2 - 4c - 4 + 1 - 2c(c+1) 
        \\
        &= c^6 - 2c^3 - 4c - 3
        \\
        &= (c^3 - 1)^2 - 4(c+1)
        \\
        &= (2c)^2 - 4c^2
    \\
    &= 0\,.
\end{align*}
Since the maximum value of a convex function is attained on its boundary, we have that for all $x \in [0,1]$, $f(x) \le \max_{x \in [0,1]} f(x) = \max(f(0), f(1)) = 0$.
\end{proof}

\end{document}